\documentclass[10xpt]{article}
\usepackage{amsfonts, amssymb, amsthm, amsmath}
\usepackage{xspace}
\input xy
\xyoption{all}
\renewcommand{\L}{{\mathcal L}}

\oddsidemargin=0.1in \evensidemargin=0.1in \textwidth=6.3in
\headheight=0.0in \topmargin=-0.3in \textheight=9.0in

\newcommand{\Tr}{\mbox{Tr}}
\newcommand{\tr}{\text{Tr}}

\newcommand{\hgamma}{\hat{\gamma}}
\newcommand{\hf}{\hat{f}}

\newcommand{\maxtensor}{\otimes_{max}}
\newcommand{\mintensor}{\otimes_{min}}
\newcommand{\tensor}{\otimes}
\renewcommand{\hat}{\widehat}

\newcommand{\Ordlin}{\mbox{{\bf Ordlin}}}

\newcommand{\FdHilb}{\mbox{{\bf FDHilb}}}


\newcommand{\Rvec}{\mbox{\bf RVec}}
\newcommand{\RVec}{\mbox{\bf Vec}_{\mathbb{R}}}

\newcommand{\Com}{{\bf Com}}

\newcommand{\FdCom}{{\bf FDCom}}

\newcommand{\B}{\mathfrak{B}}
\renewcommand{\L}{{\cal L}}

\renewcommand{\H}{{\bf H}}

\renewcommand{\1}{{\bf 1}}

\newcommand{\ket}[1]{| #1 \rangle}
\newcommand{\bra}[1]{\langle #1 |}

\newcommand{\homega}{\hat{\omega}}

\def\qed{$\Box$}

\newcommand{\id}[1]{\ensuremath{1_{#1}}}
\def\tr{\mbox{Tr}}

\newcommand{\op}{op}
\newcommand{\C}{{\mathcal C}}
\newcommand{\D}{{\mathcal D}}
\renewcommand{\text}{\mbox}
\renewcommand{\1}{{\sc 1}}

\newcommand{\beq}{\begin{equation}}
\newcommand{\eeq}{\end{equation}}
\newcommand{\beqa}{\begin{eqnarray}}
\newcommand{\eeqa}{\end{eqnarray}}
\def\cl{{\cal L}}

\def\R{{\mathbb R}}


\newcommand{\iso}{\cong}

\newcommand{\innp}[2]{
    \ensuremath{\langle #1 \mid #2 \rangle}}

%

\newcommand{\fdhilb}{
\ensuremath{\textbf{FDHilb}}\xspace}
\newcommand{\fdHilb}{\fdhilb}

\theoremstyle{plain}
\newtheorem{theorem}{Theorem}
\newtheorem{proposition}[theorem]{Proposition}
\newtheorem{lemma}[theorem]{Lemma}
\newtheorem{corollary}[theorem]{Corollary}

\theoremstyle{definition}
\newtheorem{definition}[theorem]{Definition}
\newtheorem{example}[theorem]{Example}
\newtheorem{examples}[theorem]{Examples}

\theoremstyle{remark}
\newtheorem{remark}[theorem]{Remark}

\newcommand{\tempout}[1]{{}}

%

\newcounter{thingy}
\renewcommand{\thethingy}{\arabic{section}.\arabic{thingy} }

\renewcommand{\section}[1]{ \refstepcounter{section}
\setcounter{thingy}{0} \vspace{.3in} \noindent{ \bf
\arabic{section}.~#1 }\vspace{.2in} \noindent}{}

\renewcommand{\subsection}[1]{ \refstepcounter{subsection}
\setcounter{thingy}{0}
\vspace{.2in} \noindent{ \bf \arabic{section}.\arabic{subsection}~#1 }}{}

\title{Symmetry, Compact Closure and Dagger Compactness for Categories of Convex Operational Models}
\author{Howard
  Barnum\footnote{Perimeter Institute for Theoretical Physics {\tt hbarnum@perimeterinstitute.ca}}, Ross
  Duncan\footnote{Oxford University Computing Laboratory, {\tt ross.duncan@comlab.ox.ac.uk}} and
  Alexander Wilce\footnote{Department of Mathematics, Susquehanna
    University, {\tt wilce@susqu.edu}}} \date{April 13, 2010}

\begin{document}

\maketitle


\begin{abstract}
In the categorical approach to the foundations of quantum theory,
one begins with a symmetric monoidal category, the objects of which
represent physical systems, and the morphisms of which represent
physical processes.  Usually, this category is taken to be at least
compact closed, and more often, dagger compact, enforcing a certain
self-duality, whereby preparation processes (roughly, states) are
interconvertible with processes of registration (roughly,
measurement outcomes). This is in contrast to the more concrete
``operational" approach, in which the states and measurement
outcomes associated with a physical system are represented in terms
of what we here call a {\em convex
  operational model}: a certain dual pair of ordered linear spaces --
generally, {\em not} isomorphic to one another. On the other hand,
state spaces for which there {\em is} such an isomorphism, which we
term {\em weakly self-dual}, play an important role in
reconstructions of various quantum-information theoretic protocols,
including teleportation and ensemble steering.

In this paper, we characterize compact closure of symmetric monoidal
categories of convex operational models in two ways: as a statement
about the existence of teleportation protocols, and as the principle
that every process allowed by that theory can be realized as an
instance of a remote evaluation protocol --- hence, as a form of
classical probabilistic conditioning. In a large class of cases,
which includes both the classical and quantum cases, the relevant
compact closed categories are degenerate, in the weak sense that
every object is its own dual. We characterize the dagger-compactness
of such a category (with respect to the natural adjoint) in terms of
the existence, for each system, of a {\em symmetric} bipartite
state, the associated conditioning map of which is an isomorphism.
\end{abstract}

\section{Categorical Semantics and Quantum Foundations}
\label{sec:intro}

One natural way to formalize a physical theory is as some kind of {\em
  category}, $\C$, the objects of which are the systems, and the
morphisms of which are the {\em processes}, contemplated by that
theory. In order to provide some apparatus for representing compound
systems, it is natural to assume further that $\C$ is a {\em symmetric
  monoidal} category. In the categorical semantics for quantum theory
pioneered by Abramsky and Coecke \cite{Abramsky-Coecke}, Selinger
\cite{Selinger, SelingerCPM}, and others (e.g.,
\cite{ACHbk,AD,Baez04a,CPavPaq}), it is  further assumed that $\C$
is at least compact closed, and more usually, dagger compact. This
last condition enforces a certain self-duality, in that there is a
bijection between the states of a system $A \in \C$,
represented by elements of $\C(I,A)$, and and the
measurement-outcomes associated with that system, represented by
elements of $\C(A,I)$. The motivating example here is the category
$\FdHilb$ of finite-dimensional complex Hilbert spaces and unitary
mappings --- that is, the category of finite-dimensional ``closed"
quantum systems and unitary processes. Many of the
information-processing features of finite-dimensional quantum
systems occur in any dagger-compact category, notably, conclusive
(that is, post-selected) teleportation and entanglement-swapping
protocols.  On the other hand, if our interest in a categorical reformulation of
quantum theory is mainly foundational, rather than strictly one of
systematization, these strong structural assumptions need further
justification, or at any rate, further motivation.

There is an older tradition, stemming from Mackey's work on the
foundations of quantum mechanics \cite{Mackey}, in which an individual
physical (or, more generally, probabilistic) system is represented by
a set of states, a set of observables or measurements, and an
assignment of probabilities to measurement outcomes, conditional upon
the state. From this basic idea, one is led to a representation of
systems involving pairs of ordered real vector spaces --- the {\em
  convex operational models} of our title --- and of physical processes, by certain positive linear mappings
between such spaces.  The motivating example is the category
of Hermitian parts of $C^{\ast}$ algebras and completely positive
mappings.

This ``convex operational" approach, in contrast to the categorical
one, is conservative of classical probabilistic concepts, but
liberal as to how systems may be combined and transformed, so long
as this probabilistic content is respected. In particular, there is
no standing assumption of monoidality; rather, systems are combined
using any of a variety of ``non-signaling" products. Nor is there,
in general, any hint of the kind of self-duality mentioned above ---
indeed, the natural dual object for a convex operational model is
not itself an operational model.
 Nevertheless, here again various familiar ``quantum"
phenomena -- such as no-cloning and no-broadcasting theorems,
information-disturbance tradeoffs, teleportation and entanglement
swapping protocols, and ensemble steering -- emerge naturally and in
some generality \cite{Barrett, BBLW06, BBLW08, BGW09}.  A key idea here is that
of a {\em remote evaluation protocol} \cite{BBLW08} (of which
teleportation is a special case), which reduces certain kinds of
dynamical processes to purely classical {\em conditioning}.

It is obviously of interest to see how far such convex operational
theories can be treated formally, that is, as categories, and more
especially, as symmetric monoidal categories; equally, one would like
to know how much of the special structure assumed in the categorical
approach can be given an operational motivation.\footnote{That {\em
    some} motivation is needed is clear in view of a result, to
    appear elsewhere, that any of a broad class of symmetric monoidal
  categories can be interpreted as categories of convex operational
  models.} Some first steps toward addressing these issues are taken
in \cite{BW09a, BW09b}. Here, we aim to make further progress, albeit
along a somewhat narrower front. We focus on {\em symmetric monoidal}
categories of convex operational models -- what we propose to call
{\em probabilistic theories}. We show that such a theory admits a
compact closed structure if and only if every system allowed by the
theory can be teleported (conclusively, though not necessarily with
probability $1$) through a copy of itself -- or, equivalently, if and
only if every process contemplated by the theory can be represented as
a remote evaluation protocol involving a copy of itself.  We then
specialize further, to consider {\em weakly self-dual} theories, in
which for every system $A$ there is a bipartite state $\gamma_A$ on $A
\otimes A$ corresponding to an isomorphism between $A$ and its dual,
and an effect corresponding to its inverse. (Such state spaces figure
heavily in earlier treatments of teleportation protocols \cite{BBLW08}
and ensemble steering \cite{BGW09} in general probabilistic theories.)
We show that if the state implementing weak self-duality can be chosen
to be symmetric for every $A$, then a weakly self-dual monoidal
probabilistic theory is not merely compact closed, but
dagger compact. \\

\tempout{ We also observe that any symmetric monoidal category can
be represented functorially in a category of convex operational
models, given only an interpretation of the ``scalars" -- the
endomorphisms of the tensor unit -- as determining genuine
probabilities, i.e, as values in $[0,1]$. \\}

\noindent{\bf Organization and Notation} Sections
\ref{sec:category-perspective} and \ref{sec:abstr-state-spac}
provide quick reviews of the category-theoretic and the convex
frameworks, respectively, mainly following \cite{Abramsky-Coecke}
for the former and \cite{BBLW06, BBLW08, BGW09, BW09a, BW09b} for
the latter. Section \ref{sec:mono-prob-theor} makes precise what we
mean by a monoidal probabilistic theory, as a symmetric monoidal
category of convex operational models, and establishes that all such
theories have the property of allowing {\em remote evaluation}
\cite{BBLW08}; when the state spaces involved are weakly self-dual,
teleportation arises as a special case. Section
\ref{sec:weakly-self-dual} contains the results on categories of
weakly self-dual state spaces described above.
Section \ref{sec:conclusion} discusses some of the further
ramifications of these results.

We assume that the reader is familiar with basic category-theoretic
ideas and notation, as well as with the probabilistic machinery of
quantum theory. We write $\C, \D$ etc. for categories, $A \in \C$, to
indicate that $A$ is an object of $\C$, and $\C(A,B)$ for the set of
morphisms between objects $A, B \in \C$.  Except as noted, all vector
spaces considered here will be finite-dimensional and {\em real}. We
write $\RVec$ for the category of finite-dimensional real vector
spaces and linear maps. The dual space of a vector space $A$ is
denoted by $A^{\ast}$.  An {\em ordered vector
  space} is a real vector space $V$ equipped with a \emph{regular} ---
that is, closed, convex, pointed, generating --- cone $V_{+}$, and
ordered by the relation $x \leq y \Leftrightarrow y - x \in A_{+}$.
A linear mapping $\phi : V \rightarrow W$ between ordered linear
spaces $V$ and $W$ is {\em positive} if $\phi(V_{+}) \subseteq
W_{+}$. We write $\L_{+}(A,B)$ for the cone of positive linear
mappings from $A$ to $B$.  The special case in which $B = \R$, the
positive linear functionals on $A$, is the \emph{dual cone} of
$A_+$, denoted $A_+^*$.  The category of ordered linear spaces and
positive linear maps we denote by $\Ordlin$. Finally, we make 
the standing assumption that,
except where otherwise indicated, {\em all
vector spaces considered here are finite dimensional}.\\

\tempout{Note that our definition of ordered linear space differs
from the usual one in requiring the cone to be closed and
generating.  In the remainder of the paper we use the term ``cone''
to mean ``regular cone'' unless otherwise noted.  \\}

\noindent{\bf Acknowledgements} HB and AW wish to thank Samson
Abramsky and Bob Coecke for enabling them to visit the Oxford
University Computing Laboratory in November 2009, where some of this
work was done, and for helpful discussions during that time.  The
authors also wish to thank Peter Selinger for helpful discussions.  RD
is supported by EPSRC postdoctoral research fellowship EP/E04006/1.
HB's research was supported by Perimeter Institute for Theoretical Physics;
work at Perimeter Institute is supported in part by the Government of
Canada through Industry Canada and by the Province of Ontario through
the Ministry of Research and Innovation.

\section{The Category-Theoretic Perspective}
\label{sec:category-perspective}

A {\em monoidal category} \cite{MacLane} is a
category $\C$ equipped with a bifunctor\footnote{
  Bifunctoriality means that: (i) $1_{A\otimes B} = 1_A \otimes
  1_B$; and (ii) given morphisms $f : A \rightarrow X$ and $g : B
  \rightarrow Y$ in $\C$, there is a canonical product morphism $f
  \otimes g : A \otimes B \rightarrow X \otimes Y$, such that
    $    (f \otimes g) \circ (f' \otimes g') =
    (f \circ f') \otimes (g \circ g')$.
} $\otimes : \C \times \C
\rightarrow \C$, a distinguished unit object $I$, and natural
associativity and left and
right unit deletion isomorphisms,
\[
\begin{split}
  \alpha_{A,B,C}: A \otimes (B \otimes C) \iso (A \otimes B) \otimes C,
  \\
\lambda_A : I \otimes A \iso A, \qquad\qquad \rho_A : A \otimes I \iso A  ,
\end{split}
\]
subject to some coherence conditions (for which, see \cite{MacLane}).  If these isomorphisms are
identities, the category is called \emph{strict monoidal}.  Every
monoidal category  is equivalent to a strict one, so setting $A\otimes I
= A$ etc, is harmless and we will do this throughout.

A \emph{symmetric monoidal  category} (SMC) is a
monoidal category further equipped with a natural family of
symmetry isomorphisms,
\[
\sigma_{A,B} :A \otimes B \iso B \otimes A,
\]
again, subject to some coherence conditions (for which, again, see \cite{MacLane}).  Unlike the other isomorphisms,
these symmetry isomorphisms cannot generally be made strict.

Examples of SMCs include commutative monoids (as one-object categories),
the category of sets and mappings (with $A \otimes B = A \times B$),
and -- of particular relevance for us -- the category of (say,
finite-dimensional) vector spaces over a field $K$ and $K$-linear
maps, with $A \otimes B$ the usual tensor product.
Another source of examples comes from logic: one can regard the set of sentences of
a logical calculus as a category, with {\em proofs}, composed by concatenation, as morphisms. In
this context, one can take conjunction, $\wedge$, as a monoidal
product.

Much more broadly, if somewhat less precisely, if one views the
objects of a category $\C$ as ``systems" (of whatever sort), and
morphisms as ``processes" between systems, then a natural
interpretation of the product in a symmetric monoidal category
is as a kind of accretive composition: $A
\otimes B$ is the system that consists of the two systems $A$ and $B$
sitting, as it were, side by side, without any special interaction; $f
\otimes g$ represents the processes $f : A \rightarrow X$ and $g : A
\rightarrow Y$ acting {\em in parallel}. Taking this point of view, it is helpful to regard processes of the form $I \rightarrow A$,
where $I$ is the monoidal unit in $\C$, as
{\em states}, associated with ways of preparing the system $A$.
Similarly, we regard processes of the form $a : A \rightarrow I$ as
``effects'', or measurement-outcomes.  We shall henceforth adhere to
the convention of denoting states by lower-case Greek letters $\alpha,
\beta, ...$ and effects, lower-case Roman letters $a, b, ...$.

In any monoidal category, one can regard endomorphisms $s \in
\C(I,I)$ as ``scalars'' acting on elements of $\C(A,B)$ by $sx = s
\otimes x$.  In every monoidal category, $\C(I,I)$ is a commutative
monoid, even if $\C$ is not symmetric.  When
$\C(I,I)$ is isomorphic to a particular monoid $S$, we shall say that
$\C$ is a symmetric monoidal category {\em over} $S$.

\subsection{Compact Closed Categories}

\label{sec:comp-clos-categ} A {\em dual} for an object $A$ of a
symmetric monoidal category $\C$ is an object $B$ and two morphisms,
the \emph{unit}, $\eta : I \rightarrow B \otimes A$ (not to be
confused with the \emph{tensor unit} $I$) and the \emph{co-unit},
$\epsilon : A \otimes B \rightarrow I$, such that
\begin{equation}
\begin{array}{ccc}
{\xymatrix@=12pt{
 A  \ar@{->}^{\id{A} \otimes \eta\quad}[rrr]  & & & A \otimes B \otimes A
\ar@{->}^{\epsilon \otimes \id{A}}[rrr] &&& A}} = \id{A}
\\
{\xymatrix@=12pt{
 B  \ar@{->}^{\eta \otimes \id{B}\quad}[rrr]  & & & B \otimes A \otimes B
\ar@{->}^{\id{B} \otimes \epsilon}[rrr] &&& B}} = \id{B}.
\end{array}
\end{equation}

Duals are unique up to a canonical isomorphism. Indeed, if
$(B_1,\eta_1,\epsilon_1)$ and $(B_2,\eta_2,\epsilon_2)$ are duals for
$A$, then $\phi := (\id{B_2} \otimes \epsilon_1) \circ (\eta_{2}
\otimes \id{B_1}) : B_1 \rightarrow B_2$ has inverse $\phi^{-1} = (
\id{B_1}\otimes\epsilon_2) \circ (\eta_{1} \otimes \id{B_2})$;
moreover, $\eta_2 = (\phi \otimes \id{A}) \circ \eta_1$.  Some, for
example the authors of \cite{CPavPaq}, use the term \emph{compact
  structure} to refer to what we are calling a dual, i.e., a
particular choice of $(A',\eta_A,\epsilon_A)$ for a given object $A
\in \C$ or, when applied to a category, a particular choice of
dual for each object.

A symmetric monoidal category $\C$ is \emph{compact
  closed}\footnote{Sometimes just \emph{compact}.}  if for every
object $A$ in the category, there is a dual, $(A',\eta_A,\epsilon_A)$,
where $A'$ is an object of the category.\footnote{We use the notation
  $A'$, rather than the more standard $A^{\ast}$, for the designated
  dual of an object in a compact closed category, because we wish to
  reserve the latter to denote, specifically, the dual space of a
  vector space.}  As thus defined, compact closedness is a
\emph{property} of the SMC $\C$, not an additional structure: it
requires the existence of at least one dual for each object, but not
the explicit specification of a distinguished one.  The alternative
definition of compact closed category, which differs only in requiring
a choice of duals be specified \cite{Kelly-Laplaza}, is perhaps more
common.  Owing to the uniqueness up to isomorphism mentioned above,
the various possible choices of duals are largely---but not
entirely---equivalent. A compact structure is said to be {\em degenerate} iff $A' = A$; a
compact closed category $\C$ \emph{with a distinguished compact structure} is
said to be degenerate if every object's compact structure is
degenerate.  This \emph{does} depend on an explicit choice of duals,
and thus imposes some non-trivial structure beyond compact closure.
This is the setting that will most interest us below.

\begin{remark}
  If $(A',\eta_A,\epsilon_A)$ and $(B',\eta_B, \epsilon_B)$ are duals
  for objects $A, B \in \C$, then we can construct a canonical dual
  $(A' \otimes B', \eta_{AB}, \epsilon_{AB})$ for $A \otimes B$ by
  setting $\eta_{AB} = \tau \circ (\eta_A \otimes \eta_B)$ and
  $\epsilon_{AB} = (\eta_A \otimes \eta_B) \circ \tau^{-1}$, where
  \[\tau = 1_{A'} \otimes \sigma_{AB'} \otimes 1_B : (A' \otimes A) \otimes (B'
  \otimes B) \simeq (A' \otimes B') \otimes (A \otimes B).\] Since all
  duals are isomorphic we are free to assume that $A\otimes B$ has
  this particular dual. \footnote{Technically, this depends on
    the ability to factor objects uniquely as tensor products;
    however, all categories ordinarily considered in this context have
    this property.  The fact that the symmetry isomorphism can't
    generally be made strict is relevant here.}
\end{remark}

\tempout{As an example, let $\Rvec$ be the symmetric monoidal category of finite-dimensional real vector spaces and linear mappings, with the usual tensor product. In this case, if $(B,\eta,\epsilon)$ is a dual for $A \in \Rvec$,
then $\epsilon : A \otimes B \rightarrow I$ amounts to a non-degenerate bilinear form on $A \times B$, whereby
we can identify $B$ with $A^{\ast}$. } 

In any compact closed category, an assignment $A \mapsto A'$
extends to a contravariant functor $(-)' :
\C^{\op} \rightarrow \C$, called the {\em adjoint}, taking morphisms
$\phi : A \rightarrow B$  to $\phi' : B' \rightarrow A'$  defined by:
\begin{equation}
{\xymatrix@=12pt{ B' \ar@{->}^{\eta_A \otimes \id{B'}}[rrr]
\ar@{->}^{\phi'}[dd] & & \hspace{.2in} & A' \otimes A \otimes B'
\ar@{->}^{\id{A'} \otimes \phi  \otimes \id{B'}}[dd] \\
& & \hspace{.2in} & \\
A' & & \hspace{.2in} & A'\otimes B \otimes B'. \ar@{->}^{\id{A'} \otimes
\epsilon_{B}}[lll]}}
\end{equation}

The functor $'$ is {\em nearly} involutive, in that there are natural isomorphisms
$w_A: A'' \rightarrow A$. To say that $'$ is involutive is just to say
that $A''=A$ and $\phi = \phi''$; note that this does not imply
that $w_A = \id{A}$.

\begin{remark} In the classic treatment of coherence for compact closed categories in
\cite{Kelly-Laplaza}, one has that $ \sigma \circ \eta_{A} = (1_A
\otimes w_A) \circ \eta_{A'}$; a similar condition holds for
$\epsilon$ (\cite{Kelly-Laplaza}, eq. (6.4)ff.).  In the case of a
degenerate category, this implies that 
\beq \label{eq:  near symmetry of unit} 
    \sigma \circ \eta_{A} = (1_A \otimes w_A) \circ \eta_{A}.  
\eeq
It is easy to show that if the units -- or, equivalently, co-units -- are
symmetric, in the sense that, for every object $A \in \C$, $\eta_A =
\sigma_{A,A} \circ \eta_{A}$ or, equivalently, $\epsilon_{A} =
\epsilon_{A} \circ \sigma_{A,A}$, then the the functor $'$ is
involutive.   However the converse does not necessarily hold, unless
$w_A = \id{A}$.  In general, it is not clear what coherence requirements 
are appropriate for the degenerate categories we consider, nor whether the 
functors involved are always strict.  Therefore, in Section
\ref{sec:weakly-self-dual} we will establish explicitly that the
involutiveness of the adjoint is equivalent to the symmetry of the
unit and co-unit for the compact closed categories of convex
operational models considered in this paper. \end{remark}

\begin{remark} For an arbitrary degenerate compact closed category,
there is no guarantee that the unit $\eta_A$ will be symmetric. (We thank
Peter Selinger \cite{SelingerExample} for supplying a nice example
involving a category of plane tangles.)  Thus, it is a non-trivial
constraint on such a category that the canonical adjoint be an
involution. This will be important below.\end{remark}

\subsection{Daggers}

\label{sec:daggers} A {\em dagger category} \cite{Selinger,
Abramsky-Coecke} is a category $\C$ together with an involutive
functor $(-)^{\dagger} : \C^{\op} \rightarrow \C$ that acts as the
identity on objects.  That is, $A^{\dagger} = A$ for all $A
\in \C$, and, if $f \in \C(A,B)$, then $f^{\dagger} \in \C(B,A)$,
with
\[
(g \circ f)^{\dagger} = f^{\dagger} \circ g^{\dagger} \quad \text{and} \quad
f^{\dagger \dagger} = f
\]
for all $f \in \C(A,B)$ and $g \in \C(B,C)$.\footnote{ Those new to
  categories should note that a functor from $\C^{\op}$ to $\C$ is
  sometimes called a \emph{contravariant} functor from $\C \rightarrow
  \C$; the description we have just given (minus the involutiveness
  condition) defines this notion without reference to $\C^{\op}$.}  We
say that $f$ is {\em unitary} iff $f^{\dagger} = f^{-1}$.  A {\em
  dagger-monoidal} category is a symmetric monoidal category with a
dagger such that (i) all the canonical isomorphisms defining the
symmetric monoidal structure are unitary, and (ii)
\[
(f \otimes g)^{\dagger} = f^{\dagger} \otimes g^{\dagger}
\]
for all morphisms $f$ and $g$ in $\C$.
Finally, a dagger-monoidal category $\C$ is {\em dagger compact}
if  it is compact closed and
\[
\eta_{A}  =  \sigma_{A,A'} \circ \epsilon_{A}^{\dagger}
\]
for every $A$, i.e.:
\[{\xymatrix@=12pt{
& & A \otimes A' \ar@{->}^{\sigma}[dd]\\
I \ar@{->}^{\epsilon_{A}^{\dagger}}[rru] \ar@{->}^{\eta_{A}}[rrd] & & \\
& & A' \otimes A }}\]
commutes. In the case of a degenerate compact closed category, the canonical
adjoint $'$ functions as a dagger {\em if} it is involutive. However, as remarked
above, this is a nontrivial condition.

In the work of Abramsky and Coecke \cite{Abramsky-Coecke}, a
symmetric monoidal category is interpreted as a physical theory, in
which a morphism $\alpha : I \rightarrow A$ is interpreted, as discussed above, as 
representing a {\em state} of the system $A$; a morphism $b = \beta^\dag : A \rightarrow I$ is
understood as the registration of an effect --- e.g., a measurement
outcome --- associated with $A$. The scalar $\beta^\dag \circ \alpha : I \rightarrow I$ is
understood, somewhat figuratively in the abstract setting, as the
``probability" that the given effect will occur when the given state
obtains.%
\footnote{We can be more precise here: given a dagger-compact category  
of states and processes $\C$, any dagger-monoidal functor from $\C$
 to \fdHilb, the category of finite dimensional Hilbert spaces, will
 send the scalar $\beta^\dag \circ \alpha$ to the inner product
 $\innp{\beta}{\alpha}$.}
This raises
the obvious question of how to implement the compelling idea
that probabilities should be identified with real numbers in the
interval $[0,1]$ without passing through Hilbert space.
One way to do this is simply to posit a
mapping $p : \C(I,I) \rightarrow [0,1]$, whereby the scalars
of $\C$ can be interpreted probabilistically. Another is to
examine the symmetric monoidal possibilities in cases in which the
category consists, {\em ab initio}, of concretely described
probabilistic models of a reasonably simple and general sort. In this
paper, we concentrate on this second strategy.
As a first step, in the next section we describe the kinds of concrete probabilistic models
we have in mind.

\section{Convex Operational Models and their Duals}
\label{sec:abstr-state-spac}

The more traditional approach to modeling probabilistic physical
theories \cite{Mackey, Beltrametti-Cassinelli} begins by associating
to each individual physical (or other probabilistic) system a triple
$(X,\Sigma,p)$ -- sometimes called a {\em Mackey triple}
--- where $\Sigma$ is a set of possible {\em states}, $X$ is a set
of possible measurement-outcomes, and $p : X \times \Sigma
\rightarrow [0,1]$ assigns to each pair $(x,s)$ the probability,
$p(x,s)$, that $x$ will occur, if measured, when the system's state
is $s$.

This minimal apparatus can be ``linearized'' in a natural way.  
The probability function $p$ gives us a mapping $\Sigma \rightarrow [0,1]^{X}$, 
namely $s \mapsto p( \cdot , s)$. We can plausibly identify each state $s \in \Sigma$ 
with its image under this mapping (thus identifying states if they cannot be
distinguished statistically by the outcomes in $X$). Having done so,
let $\Omega$ denote the point-wise closed, and hence compact, convex
hull of $\Sigma \subseteq [0,1]^{X}$. This represents the set of
possible probabilistic {\em mixtures} of states in $\Sigma$, in so 
far as these can be distinguished by outcomes in $X$. Every
measurement outcome $x \in X$ can now be represented by the affine
evaluation functional $a_x : \Omega \rightarrow [0,1]$, given by $a_x(\alpha) =
\alpha(x)$ for all $\alpha \in \Omega$. More broadly, we can regard
{\em any} affine functional $a : \Omega \rightarrow [0,1]$ as
representing a mathematically possible measurement outcome, having
probability $a(\alpha)$ in state $\alpha \in \Omega$. Such
functionals are called {\em effects} in the literature.

In general, the (mixed) state space $\Omega$ that we have just
constructed will have infinite affine dimension. Accordingly, for
the next few paragraphs, we suspend our standing
finite-dimensionality assumption. Now, any compact convex set
$\Omega$ can be embedded, in a canonical way, as a {\em base} for
the positive cone $V_{+}(\Omega)$ of a regularly ordered linear
$V(\Omega)$ \cite{Alfsen}. This means that every $\rho \in
V_{+}(\Omega)$ has the form $\rho = t \alpha$ for a unique scalar $t
\geq 0$ and a unique vector $\alpha \in \Omega$ (hence, $\Omega$
spans $V(\Omega)$). This space $V(\Omega)$ is complete in a natural
norm, the {\em base norm}, the unit ball of which is given by the
closed convex hull of $\Omega \cup -\Omega$. Moreover, $V(\Omega)$
has the following universal property: every bounded affine mapping
$L : \Omega \rightarrow {\bf M}$, where ${\bf M}$ is any real Banach
space, extends uniquely to a bounded linear mapping $L : V(\Omega)
\rightarrow {\bf M}$.

In particular, every affine functional on $\Omega$ -- in particular,
every effect -- extends uniquely to a linear functional in
$V(\Omega)^{\ast}$. In particular, there is a unique {\em unit
functional} $u_\Omega \in V(\Omega)^{\ast}$ such that
$u_\Omega(\alpha) = 1$ for $\alpha \in \Omega$, and $\Omega =
u_\Omega^{-1}(1) \cap V_{+}(\Omega)$.  Thus, effects correspond to
positive functionals $a \in V(\Omega)^{\ast}$ with $0 \leq a \leq
u_\Omega$.

One often regards {\em any} effect $a \in V(\Omega)^{\ast}$ as a
{\em bona fide} measurement outcome. This is the point of view,
e.g., of \cite{BBLW06, BBLW08}. However, we may sometimes wish to
privilege certain effects as ``physically accessible". This suggests
the following more general formulation:

\begin{definition}\label{defn:COM}
A {\em convex operational model} (COM) is a triple $(A,A^{\#},u_A)$
where \begin{itemize} \item[(i)] $A$ is a complete base-normed space
with (strictly positive) unit functional $u_A$, and
\item[(ii)] $A^{\#}$ is a weak-$\ast$ dense subspace of $A^{\ast}$,
ordered by a chosen regular cone $A^{\#}_{+} \subseteq A^{\ast}_{+}$
containing $u_A$. \end{itemize} An {\em effect} on $A$ is a
functional $a \in A^{\#}_{+}$ with $a \leq u_A$.
\end{definition}

\noindent Henceforth, where no ambiguity seems likely, we write $A$
for the triple $(A,A^{\#},u_{A})$. Also, we now revert, for the
balance of this paper, to our standing assumption  that {\em all
COMs are finite-dimensional.}  In this case, the weak-$\ast$ density
assumption above simply says that $A^{\#} = A^{\ast}$, so that
$A^{\#} = A^{\ast}$ as vector spaces. Even in this situation,
however, the chosen cone $A^{\#}_{+}$ will generally be smaller than
the dual cone $A^{\ast}_{+}$, so the positive cone
$(A^{\#})_{+}^{\ast}$ will in general be {\em} larger than $A_{+}$.
It is useful to regard normalized elements of the former cone as
{\em mathematically consistent} probability assignments on the
effects in $A^{\#}$, from which the model singles out those in
$A_{+}$ as {\em physically} possible. In the special case in which
$A^{\#}_+ = A^{\ast}_+$ --- as, e.g., in the case of quantum systems
--- we shall say that the COM $A$ is {\em saturated}.

\begin{example}\label{ex:prob-simplex}
  Let $E$ be a finite set, thought of as the outcome set for a
  discrete classical experiment. Take $A = {\mathbb{R}}^{E}$, with
  $A_{+}$ the cone of non-negative functions on $E$, and let $u_{A}(f)
  = \sum_{x \in E} f(x)$. Then $\Omega = u^{-1}(1)$ is simply the set
  of probability weights on $E$. Geometrically, this last is a {\em
    simplex}.  In finite dimensions, every simplex has this
  form. Accordingly, we say a COM is {\em classical} iff  its normalized
  state space is a simplex.
\end{example}

\begin{example}\label{ex:densityoperators}
  Let $\H$ be a finite-dimensional complex Hilbert space, and let $A =
  \L_{h}(\H)$, the space of Hermitian operators $a : \H \rightarrow
  \H$, with the usual positive cone, i.e, $A_{+}$ consists of all
  Hermitian operators of the form $a^{\dagger} a$. Let $u_{A}(a) =
  \tr(a)$. Then $\Omega_{A}$ is the convex set of density operators on
  $\H$, i.e., the usual space of mixed quantum states.
\end{example}

\begin{example}\label{ex:mackey-triple}
  Let $(X,\Sigma,p)$ be any Mackey triple. Construct the state-space $\Omega$
  and the associated ordered Banach space $V(\Omega)$ as described above. Letting
  $A^{\#}_{+}$ be the cone in $V^{\ast}(\Omega)$ generated by the evaluation functionals
  $a_{x}$, $x \in X$, we have a convex operational model. This will
  be finite-dimensional iff the span of (the image of) $\Sigma$ in ${\Bbb R}^{X}$
  is finite dimensional.

\end{example}

\subsection{Processes as Positive Mappings}

\label{sec:proc-as-posit}
\begin{definition}\label{defn:COMmorphism}
A {\em morphism of COMs} from $(A,A^{\#},u_A)$ to $(B,B^{\#},u_B)$ is
a positive linear map $\phi : A \rightarrow B$ such that the usual linear
adjoint map $\phi^* : B^{*} \rightarrow A^{*}$ is positive with respect
to the designated cones $A^{\#}_{+}$ and $B^{\#}_{+}$.
\end{definition}

\noindent
The set of morphisms of COMs from $A$ to $B$ is clearly a sub-cone of
$\L_{+}(A,B)$. It is clear that the composition of two mappings of
COMs is again a mapping of COMs, so that COMs form a concrete category.

\begin{definition}\label{defn:process}
 Let $A$ and $B$ be COMs. A {\em process} from $A$ to $B$ is a morphism $\phi : A \rightarrow B$ such that, for
 every state $\alpha \in \Omega_A$, $u_{B}(\phi(\alpha)) \leq 1$, or,
 equivalently, if $\phi^{\ast}(u_{B}) \leq u_{A}$.
\end{definition}

If $\phi : A \rightarrow B$ is a process, we can
regard $u_{B}(\phi(\alpha))$ as the {\em probability} that the
process represented by $\phi$ occurs. If we regard ${\mathbb{R}}$ as an
COM with $u_{\mathbb{R}}$ the identity mapping on ${\mathbb{R}}$, this is
consistent with our understanding of $a(\alpha)$ as the probability
of the effect $a : A \rightarrow {\mathbb{R}}$ occurring. Notice that a
positive linear map $\phi : {\mathbb{R}} \rightarrow A$ is a process if
and only if $\phi(1)$ is a sub-normalized state, while a positive functional $f
: A \rightarrow {\mathbb{R}}$ is a process if and only if $f \in A^{\#}_{+}$ and $f
\leq u_A$ -- in other words, if and only if $f$ is an effect.
Finally, since $\Omega_{A}$ is compact,
$u_{A}(\phi(\alpha))$ attains a maximum value, say $M$ on
$\Omega_{A}$.  $M^{-1}\phi$ is a process, so every morphism of
COMs is a positive multiple of a process.

\subsection{Bipartite States and Composite Systems}

\label{sec:bipart-stat-comp} Given two separate systems, represented
by COMs $A$ and $B$, we should expect that any state of the
composite system $AB$ will induce a joint probability assignment
$p(a,b)$ on pairs of effects $a \in A^{\#}$, $b \in B^{\#}$. If the
two systems can be prepared independently, we should also suppose
that, for any two states $\alpha \in A$ and $\beta \in B$, the
product state $\alpha \otimes \beta$, given by $(\alpha \otimes
\beta)(a,b) = \alpha(a) \beta(b)$, will be a legitimate joint state.
Finally, if the two systems do not interact,
 the choice of measurement made on $A$ ought
not to influence the statistics of measurement outcomes on $B$, and
vice versa. This latter ``no-signaling" condition is equivalent \cite{Wilce92} to
the condition that the joint probability assignment $p$ extends to a bilinear
form on $A^{\#} \times B^{\#}$, normalized so that $p(u_A, u_B) =
1$. Abstractly, then, one makes the following definition.

\begin{definition}\label{defn:bipartite-state}
A (normalized, non-signaling) {\em bipartite state} between convex
operational models $A$ and $B$ is a bilinear form $\omega : A^{\#}
\times B^{\#} \rightarrow {\Bbb R}$ that is {\em positive}, in the
sense that $\omega(a,b) \geq 0$ for all effects $a \in A^{\#}$ and $b
\in B^{\#}$, and \emph{normalized} (satisfies $\omega(u_A,u_B) = 1$).
\end{definition}

Implicit in this definition is the assumption, lately called {\em
local tomography} \cite{D'Ariano2006a}, that a joint state is
determined by the joint probabilities it assigns to measurement
outcomes associated with the local systems $A$ and $B$. As has been
pointed out by many authors, e.g. \cite{Araki, KRF, Barrett}, this
condition is violated in both real and quaternionic quantum theory,
and can therefore be made to serve as an axiom separating standard
complex QM from these. A more general notion of non-signaling
bipartite state would merely associate, rather than identify, each
such state with a positive bilinear form on $A^{\#} \times B^{\#}$.
See the remarks following Definition \ref{defn:composite-state-space}
for more on this.

It is clear that any product $\omega = \alpha \otimes \beta$ of
normalized states $\alpha \in A$ and $\beta \in B$ defines a
non-signaling state; hence, so do convex combinations of product
states. Non-signaling states arising in this way, as mixtures of
product states, are said to be {\em separable} or {\em
  unentangled}. An {\em entangled} non-signaling state is one
that is {\em not} a convex combination of product states.  Many of the
basic properties of entangled {\em quantum} states actually hold for
entangled states in this much more general setting \cite{KRF, BBLW06}.

The space $\B(A^{\#},B^{\#})$ of all bilinear forms on $A^{\#}
\times B^{\#}$, ordered by the cone of all positive bilinear forms,
is the {\em maximal tensor product}, $A \maxtensor B$, of $A$ and
$B$. This notation is reasonable, since (in finite dimensions),
$\B(A^{\#},B^{\#})$ is one model of the tensor product
$(A^{\#})^{\ast} \otimes (B^{\#})^{\ast}$ --- thus, of the
vector-space tensor product $A \otimes B$.\footnote{This is a straightforward extension
  of the definition in \cite{BW09a} to the context of possibly
  non-saturated models.}
Ordering $A \otimes B$ instead by the generally much smaller cone of
unentangled states, that is, the cone generated by the product
states, gives the {\em minimal tensor product}, $A \mintensor B$. It
is important to note that these coincide only when $A$ or $B$ is
classical \cite{BBLW06}. If $A$ and $B$ are quantum state spaces,
then the cone of bipartite density matrices for the composite system
lies properly between the maximal and minimal cones. This indicates
the need for something more general:

\tempout{Although the maximal and minimal tensor products are useful
conceptual tools, they are respectively too large and to small for
many purposes. For example, just as all states in $A \mintensor B$
are unentangled, all effects on $A \maxtensor B$ are unentangled.
Thus, one cannot hope to reproduce quantum information-processing
protocols such as teleportation using only one of these tensor
products; rather, one needs to use both together. }

\begin{definition}\label{defn:composite-state-space}
A (locally tomographic) {\em composite} of COMs $A$ and $B$ is a
convex operational model $(AB, (AB)^{\#},u_{AB})$, such that $AB
\subseteq \B(A^{\#},B^{\#})$, with $u_{AB} = u_{A} \otimes u_{B}$,
$\alpha \otimes \beta \in (AB)_{+}$ for all $\alpha \in A_{+}$,
$\beta \in B_{+}$, and $a \otimes b \in (AB)^{\#}_{+}$ for all $a
\in A^{\#}_{+}, b \in B^{\#}_{+}$.
\end{definition}

It is worth stressing that there are perfectly reasonable theories
that are {\em not} locally tomographic. Indeed, one of these is
quantum mechanics over the real, rather than complex, scalars.  We
might more generally define a {\em composite in the wide sense} of
COMs $A$ and $B$ to be a COM $(AB,(AB)^{\#},u_{AB})$, together with
(i) a positive linear embedding (injection) $i : A \mintensor B
\rightarrow AB$, and (ii) a positive  map $r : AB \rightarrow A
\maxtensor B$, surjective as a \emph{linear} map, such that for all
$a \in A^{\#}, b \in B^{\#}$, \[r(i(\alpha \otimes \beta))(a,b) =
a(\alpha) b(\beta),\] i.e., $r \circ i$ is the canonical embedding
of $A \mintensor B$ in $A \maxtensor B$. However, we shall make no
use of this extra generality here. Accordingly, we assume henceforth
that {\em all composites are locally tomographic}, as per Definition
\ref{defn:composite-state-space} above.

\subsection{Conditioning and Remote Evaluation}

\label{sec:cond-remote-eval} A bipartite state $\omega$ on $A$ and
$B$ gives rise to a positive linear mapping $\hat{\omega} : A^{\#}
\rightarrow B$ with
\[
b(\hat{\omega}(a)) = \omega(a,b)
\]
for all $a \in A^{\#}$
and $b \in B^{\#}$; dually, a bipartite effect $f \in (AB)^{\#}$
gives rise to a linear map $\hat{f}:  A \rightarrow B^{\#}$,
given by $\hat{f}(\alpha)(\beta) = f(\alpha \otimes \beta)$, and
subject to $f(\alpha) \leq u_B$ for all $\alpha \in \Omega_A$.

The \emph{marginals} of $\omega$ are given by $\omega_B =
\hat{\omega}(u_A)$ and $\omega_A = \hat{\omega}^{\ast}(u_B)$. Note that
$\omega$ is normalized iff  $u_{B}(\hat{\omega}(u_A)) = 1$. We can
define the {\em conditional states} of $A$ and $B$ given (respectively) effects $b
\in [0,u_B]$ and $a \in [0,u_A]$ by
\[
\omega_{A|b} := \frac{\homega^*(b)}{\omega_B(b)}
\ \ \text{and} \ \
\omega_{B|a} := \frac{\homega(a)}{\omega_A(a)}
\]
provided the marginal probabilities $\omega_B(b)$ and $\omega_A(a)$
are non-zero. Accordingly, we refer to $\hat{\omega}(a)$ as the {\em
  un-normalized conditional state}.
Notice that the linear adjoint, $\hat{\omega}^{*} : B^{\#} \rightarrow
A$, of $\hat{\omega}$ represents  the same state, but evaluated in the
opposite order: $\hat{\omega}^{*}(b)(a) = \homega(a)(b) = \omega(a,b)$.

\begin{lemma}[Remote Evaluation 1]
\label{lemma:remote-eval-1}
Let $A,B$ and $C$ be convex operational models. For any bipartite effect on $f \in (AB)^{\ast}$ and
any bipartite state $\omega \in BC$, and for any state $\alpha \in
A$,
\[(f \otimes -)(\alpha \otimes \omega) = \homega(\hat{f}(\alpha)).\]
\end{lemma}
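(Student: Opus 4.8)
The plan is to verify the identity ``pointwise'', by pairing both sides with an arbitrary effect $c \in C^{\#}$. Observe first that both expressions denote elements of the vector space $C$: writing $f \otimes -$ for the linear map $f \otimes \id{C} : (A \tensor B) \tensor C \to \R \tensor C = C$, and noting $\alpha \tensor \omega \in A \tensor B \tensor C$, the left-hand side lies in $C$; and since $\hf(\alpha) \in B^{\#}$ and $\homega : B^{\#} \to C$, so does the right-hand side. In our finite-dimensional setting $C^{\#} = C^{\ast}$ as vector spaces, so $C = (C^{\#})^{\ast}$ canonically, and an element of $C$ is determined by its values on the effects $c \in C^{\#}$. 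Hence it suffices to check that $c\!\left((f \tensor \id{C})(\alpha \tensor \omega)\right) = c\!\left(\homega(\hf(\alpha))\right)$ for every $c \in C^{\#}$.

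Next, fix a (finite) decomposition $\omega = \sum_i \beta_i \tensor \gamma_i$ of $\omega$ as a vector in $B \tensor C$; here the $\beta_i \in B$ and $\gamma_i \in C$ need not be states. As a bilinear form on $B^{\#} \times C^{\#}$ this same $\omega$ is $\omega(b,c) = \sum_i b(\beta_i)\, c(\gamma_i)$, and $\homega$ is characterised by $c(\homega(b)) = \omega(b,c)$. Therefore, using the defining formula $\hf(\alpha)(\beta) = f(\alpha \tensor \beta)$,
\[
c\!\left(\homega(\hf(\alpha))\right) = \omega(\hf(\alpha),c) = \sum_i \hf(\alpha)(\beta_i)\, c(\gamma_i) = \sum_i f(\alpha \tensor \beta_i)\, c(\gamma_i).
\]
On the other hand, $\alpha \tensor \omega = \sum_i \alpha \tensor \beta_i \tensor \gamma_i$, so $(f \tensor \id{C})(\alpha \tensor \omega) = \sum_i f(\alpha \tensor \beta_i)\, \gamma_i \in C$, and pairing with $c$ gives $\sum_i f(\alpha \tensor \beta_i)\, c(\gamma_i)$. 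The two sums agree, which proves the lemma.

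There is no real obstacle here; the whole content lies in keeping the canonical identifications straight --- $B \cong (B^{\#})^{\ast}$, $C \cong (C^{\#})^{\ast}$, and $\B(B^{\#},C^{\#}) \cong B \tensor C$ --- and in observing that the right-hand computation does not depend on the chosen decomposition of $\omega$, which is immediate from bilinearity. It is worth flagging in the write-up that no positivity or normalization is used: the formula $\hf(\alpha)(\beta) = f(\alpha \tensor \beta)$ makes sense for arbitrary $f \in (AB)^{\ast}$ (the condition $\hf(\alpha) \le u_B$ is irrelevant), and the identity is a purely multilinear fact; the hypotheses that $f$ is an effect and that $\alpha$, $\omega$ are states matter only for reading the two sides as the (sub-normalized) conditional state produced by remote evaluation.
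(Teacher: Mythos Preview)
Your proof is correct and follows essentially the same idea as the paper's: both exploit linearity and the fact that simple tensors span the relevant spaces. The paper's argument is the one-line version --- check the identity when $f$ and $\omega$ are product effect and product state, then extend by linearity --- whereas you carry out the computation explicitly by decomposing only $\omega$ as $\sum_i \beta_i \otimes \gamma_i$ and pairing with an arbitrary $c \in C^{\#}$; this is a minor variation in execution, not a genuinely different route.
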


\noindent{\em Proof:} It is straightforward that this holds where $f$ and
$\omega$ are a product effect and a product state, respectively.
Since these generate $(AB)^{\ast}$ and $AB$, the result follows.
$\Box$\\

Operationally, this says that one can {\em implement} the
transformation $\homega \circ \hat{f}$ by {\em preparing} the tripartite
system $ABC$ in state $\alpha \otimes \omega$, where $\alpha \in A$
is the ``input" state to be processed, and then making a measurement
on $AB$, of which $f$ is a possible outcome: the un-normalized
conditional state of $C$, given the effect $f$ on $AB$, is exactly
$\homega(\hat{f}(\alpha))$.  Thus, the process $\phi := \hat{\omega} \circ \hat{f} : A \rightarrow C$
becomes a special case of {\em conditioning}.
In
\cite{BBLW08}, we have called this protocol {\em remote evaluation}.
Note that
conclusive, or post-selected, teleportation arises as the special
case of remote evaluation in which, up to some specified
isomorphism, $C \simeq A$ and $\homega \circ \hat{f} \simeq
\id{A}$. We shall return to this point below.

\section{Categories of convex operational models}

\label{sec:mono-prob-theor} We now wish to chart some connections
between the two approaches outlined above. In the first place, we
will bring some category-theoretic order to the concepts developed
in the preceding section. \tempout{Up to a point, this is an
entirely straightforward exercise: we shall consider symmetric
monoidal categories in which the objects are convex operational
models, in which the morphisms are maps thereof, and in which the
the monoidal product of two state spaces is a non-signaling product
of the kind described above. Let us make this precise:}

Since morphisms of COMs compose, we can define a category $\Com$
of all convex operational models and COM morphisms.  As described in
Section \ref{sec:proc-as-posit}, the hom-sets $\Com(A,B)$ are
themselves cones.  Let $I$ denote the COM $\mathbb R$ with its
standard cone and order unit, i.e.  $I = ({\mathbb R},{\mathbb R},
1)$.

\begin{definition}\label{defn:category-of-COMs}
A \emph{category of COMs} is a subcategory
$\C$ of $\Com$ such that:
(i)  $\C(A,B)$ is a (regular) sub-cone of the cone $\Com_{+}(A,B)$;
(ii) $\C$ contains the distinguished COM $I$;
(iii) $\C(I,A) \simeq A$; and
(iv) $\C(A,I) \simeq A^{\#}$.
We call a such a category {\em finite-dimensional} if all state
spaces $A \in \C$ are finite dimensional.
\end{definition}

A more general definition would require only that $\C(A,B)$ be {\em
  some} set of {\em processes}, in the sense of Definition
\ref{defn:process}, between $A$ and $B$.  However, we should like to
be able to construct random mixtures of processes, so $\C(A,B)$ should
at least be convex. Allowing for the taking of limits as a reasonable
idealization, it is plausible to take $\C(A,B)$ also to be
closed. Finally, one should require that, if $\phi : A \rightarrow B$
is a physically valid process, then so is $t\phi$ for any $t \in
[0,1]$ -- this reflecting the possibility of {\em attenuating} a
process (as, for instance, by some filter that admits only a fraction
$t$ of incident systems, but otherwise leaves systems unchanged, or by
in some other way conditioning its occurrence on an event assigned a
probability less than $1$). This much given, the physically meaningful
processes between two systems should generate a closed, convex,
pointed cone of positive mappings, as per Definition 13.

In \cite{BBLW06,BBLW08,BGW09}, a {\em probabilistic theory} is
defined, rather loosely, to be any class of COMs (or ``probabilistic
models") that is equipped with some device, or devices, for forming
composite systems. Tightening this up considerably, we make the following
definition.

\begin{definition} \label{defn:monoidal-probabilistic-theory}
A {\em monoidal category of COMs} is a category of COMs equipped
with a monoidal structure, such that
(i) the monoidal unit is the COM $I$;
(ii) for every $A, B \in \C$, $A \otimes B$ is a non-signaling composite
in the sense of Definition \ref{defn:composite-state-space}.
\end{definition}

While Definition
\ref{defn:monoidal-probabilistic-theory} does not require it, in the
rest of the paper we will assume that all monoidal categories of
COMs are \emph{symmetric} monoidal, and (in accordance with our standing assumtion), 
finite dimensional.  

As an example, the category $\FdCom$ of all
finite-dimensional convex operational models and positive mappings can be
made into a monoidal category in two ways, using either the maximal or
the minimal tensor product.  \tempout{(However, as discussed in
  \cite{BW09b}, this is better viewed as a linearly distributive
  category, since it lacks any natural self-dual structure.)}  Another
example is the ``box-world" considered, e.g., in \cite{GMCD10,ShortBarrett09}: here,
state spaces are constructed by forming maximal tensor products of
basic systems, the normalized state spaces of which are
two-dimensional squares.  Another example is afforded by the category
of quantum-mechanical systems, represented as the self-adjoint parts
of complex matrix algebras.  Here, the appropriate monoidal product of
two systems $A$ and $B$ is what is sometimes referred to as the
``spatial" tensor product, obtained by forming tensor products of the
Hilbert spaces on which the $A$ and $B$ act, and taking the
self-adjoint operators on this space.

\subsection{Remote Evaluation Again}

\label{sec:remote-eval-again} We now reformulate the  conditioning
maps and remote evaluation protocol discussed above in purely
categorical terms.  In fact, both make sense in any symmetric
monoidal category $\C$. Suppose, then, that $ \omega : I \rightarrow
B \otimes A$ is a ``bipartite state", i.e, a state of the composite
system $B \otimes A$. Then there is a canonical mapping
$\hat{\omega} : \C(B,I) \rightarrow \C(I,A)$ given by

\begin{equation}\label{eq:remote-eval-again}
{\xymatrix@=12pt{ I \ar@{->}^{\omega}[rr] \ar@{->}_{\hat{\omega}(b)}[ddrr] & & B \otimes A \ar@{->}[dd]^{b \otimes \id{A}}\\
& & \\
& & A }}
\end{equation}
Dually, if $f  \in \C(A \otimes B, I)$, there is a
natural mapping $\hat{f} : \C(I,A) \rightarrow \C(B,I)$ given by
\begin{equation}\label{eq:remote-eval-again-dual}
{\xymatrix@=12pt{ B \ar@{->}^{\alpha \otimes \id{B}}[rr] \ar@{->}_{\hat{f}(\alpha)}[ddrr] & & A \otimes B \ar@{->}[dd]^{f}\\
& & \\
& & I }}.
\end{equation}
Note that if $\C$ is {\em already} a category of COMs, then
$\hat{\omega}$ and $\hat{f}$ are exactly the conditioning and
co-conditioning maps discussed in the last section. This has a
simple but important corollary, namely, that these maps are
indeed morphisms of COMs. Another consequence is that any monoidal
category  of COMs is {\em closed under conditioning} --
that is, if $\omega$ is a normalized bipartite state of such a
theory, belonging, say, to a composite system $AB$, then for every
effect $a$ on $A$ and $b$ on $B$, the composite states
$\omega_{B|a}$ and $\omega_{A|b}$ are indeed states of $A$ and $B$,
respectively (as opposed to merely being elements of $(A^{\#})^{\ast}$ and
$(B^{\#})^{\ast}$).\footnote{This is closely related to the notion of \emph{regular
composite} introduced in \cite{BBLW08}.}

The remote evaluation protocol of Lemma \ref{lemma:remote-eval-1} also
has a purely category-theoretic formulation:

\begin{lemma}[Remote Evaluation 2]
\label{lemma:remote-eval-2}
Let $\omega : I \rightarrow B \otimes C$ and $f : A \otimes B \rightarrow I$ in $\C$. Then
\begin{equation} \label{eq:remote-eval-ii}
\hat{\omega}(\hat{f}(\alpha)) \
 = \ (f \otimes \id{C})\circ (\id{A} \otimes \omega) \circ \alpha
 = (f \otimes \id{C}) \circ (\alpha \otimes \omega)
\end{equation} for all
$\alpha \in \C(I,A)$. Dually, for every $\beta \in \C(I,B)$, we have
\begin{equation}\label{eq:remote-eval-ii-bis}
\hat{\omega}^{\ast}(\hat{f}^{\ast}(\beta)) \
= \ (\id{A} \otimes f) \circ (\omega \otimes \beta).
\end{equation}
\end{lemma}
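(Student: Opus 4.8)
The plan is to verify the identity \eqref{eq:remote-eval-ii} by unwinding the definitions of $\hat\omega$ and $\hat f$ given by the diagrams \eqref{eq:remote-eval-again} and \eqref{eq:remote-eval-again-dual}, and then to chase a single string diagram, using only bifunctoriality, naturality of $\sigma$, and the triangle/interchange laws of a symmetric monoidal category. First I would observe that, by the defining diagram \eqref{eq:remote-eval-again-dual}, $\hat f(\alpha) : B \to I$ is the composite $f \circ (\alpha \otimes \id{B})$, and then by \eqref{eq:remote-eval-again}, $\hat\omega(b) : I \to C$ is $(b \otimes \id{C}) \circ \omega$ for $b : B \to I$. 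Substituting $b = \hat f(\alpha)$ gives
\[
\hat\omega(\hat f(\alpha)) = \big(\,(f \circ (\alpha \otimes \id{B}))\otimes \id{C}\,\big)\circ \omega.
\]

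The remaining work is to rewrite the right-hand side. Using bifunctoriality to split the tensored composite, $(f \circ (\alpha \otimes \id B)) \otimes \id C = (f \otimes \id C) \circ ((\alpha \otimes \id B) \otimes \id C)$, and then $(\alpha \otimes \id B)\otimes \id C = \id A \otimes \id B \otimes \id C$ precomposed appropriately — more precisely, writing $\alpha : I \to A$, one has $(\alpha \otimes \id B)\otimes \id C = (\alpha \otimes \id{B\otimes C}) = (\id A \otimes \id{B\otimes C}) \circ (\alpha \otimes \id{B \otimes C})$, which after identifying $I \otimes (B\otimes C)$ with $B \otimes C$ (harmless, by strictness) is just $\alpha \otimes \id{B\otimes C}$ viewed as a map $B\otimes C \to A \otimes B \otimes C$. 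Feeding $\omega : I \to B \otimes C$ into this, the middle expression $(f\otimes \id C)\circ(\id A \otimes \omega)\circ \alpha$ and the right expression $(f \otimes \id C)\circ(\alpha \otimes \omega)$ both emerge: the equality $(\id A \otimes \omega)\circ \alpha = \alpha \otimes \omega$ is exactly the interchange law $(\id A \circ \alpha)\otimes(\omega \circ \id I) = (\alpha \otimes \omega)$ together with $\id A \otimes \omega$ being read as $(\id A \otimes \omega)$ acting after $\alpha \otimes \id I$. So all three expressions in \eqref{eq:remote-eval-ii} coincide by bifunctoriality alone; no use of $\sigma$ is needed for the first identity.

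For the dual identity \eqref{eq:remote-eval-ii-bis}, I would run the mirror-image argument: $\hat f^{\ast} : \C(I,B) \to \C(A,I)$ and $\hat\omega^{\ast} : \C(C,I)\to \C(I,B)$ — wait, here one must be careful about which linear adjoint is meant, since in the categorical setting $\hat\omega^\ast$ and $\hat f^\ast$ should be read as the analogous canonical maps built from $\omega$ and $f$ by attaching effects/states on the other tensor factor. Concretely, $\hat f^{\ast}(\beta) = f \circ (\id A \otimes \beta) : A \to I$ for $\beta : I \to B$, and $\hat\omega^{\ast}(a') = (\id A \otimes a')\circ \sigma \circ \omega$ or simply, after choosing the orientation $\omega : I \to B\otimes C$, the map sending $c' : C \to I$ to $(\id B \otimes c')\circ \omega$; substituting and applying bifunctoriality exactly as before yields $(\id A \otimes f)\circ(\omega \otimes \beta)$.

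The main obstacle is purely bookkeeping: keeping straight the placement of tensor factors and the direction of the "adjoint" maps $\hat f^\ast,\hat\omega^\ast$ — in particular making sure the strictification convention $X \otimes I = X = I \otimes X$ is applied consistently so that maps like $\alpha \otimes \id{B\otimes C}$ have an unambiguous domain and codomain, and that $\alpha \otimes \omega$ really does equal both $(\id A \otimes \omega)\circ \alpha$ and $(\alpha \otimes \id{B\otimes C})\circ \omega$. Once the strict-monoidal normalization is fixed, the proof is a one-line string-diagram deformation, and indeed this is essentially the categorical shadow of Lemma \ref{lemma:remote-eval-1}, whose proof (reduce to product states and effects, then extend by bilinearity) could alternatively be invoked when $\C$ is a category of COMs. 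I would present the string-diagram version since it covers the general symmetric monoidal case claimed in the statement.
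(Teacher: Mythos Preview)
Your proposal is correct and is essentially the same argument as the paper's: both unwind the defining diagrams \eqref{eq:remote-eval-again} and \eqref{eq:remote-eval-again-dual} and then invoke bifunctoriality of $\otimes$ to identify the resulting composite with $(f\otimes\id{C})\circ(\id{A}\otimes\omega)\circ\alpha = (f\otimes\id{C})\circ(\alpha\otimes\omega)$. The paper packages this as a single commutative diagram (the square commuting by bifunctoriality, the two triangles coming from tensoring \eqref{eq:remote-eval-again-dual} with $C$ and from \eqref{eq:remote-eval-again}), whereas you write it equationally / as a string-diagram deformation; these are the same proof in different notation. Your hesitation about the precise meaning of $\hat{\omega}^{\ast}$ and $\hat{f}^{\ast}$ in the dual identity is reasonable---the paper itself simply says the proof of \eqref{eq:remote-eval-ii-bis} is ``similar'' and does not spell it out---but once you fix the convention that the starred maps condition on the \emph{other} tensor factor, the mirror-image argument goes through exactly as you indicate.
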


\noindent{\em Proof}: We prove \eqref{eq:remote-eval-ii}, the proof of \eqref{eq:remote-eval-ii-bis} being
similar. Tensoring the diagram  \eqref{eq:remote-eval-again-dual} with $C$
(on the right) gives the right-hand triangle in the diagram below. Applying (4) to compute $\hat{\omega}(\hat{f}(\alpha))$ gives the
lower triangle. The square commutes by the bifunctoriality of the tensor.
\[\ \
{\xymatrix@=12pt{ A \ar@{->}^{\id{A} \otimes \omega}[rrr]  & & & A \otimes B \otimes C \ar@{->}^{f \otimes \id{C}}[rrrdd] & & & & &  \\
& & & & & & & & \\
& & & & & & C & & \\
& & & &  & & & &  \\
I \ar@{->}[uuuu]^{\alpha} \ar@{->}^{\omega}[rrr]
\ar@{..>}^{\rule[-2ex]{0mm}{2ex}\qquad\qquad\qquad\qquad\hat{\omega}(\hat{f}(\alpha))}[rrrrrruu] & & & B \otimes
C \ar@{->}^{\alpha \otimes \id{B} \otimes \id{C}}[uuuu]
\ar@{->}_{\hat{f}(\alpha) \otimes \id{C} }[uurrr]
 & & & & & }}\]
Chasing around the diagram gives the desired result.
\rule{2mm}{0mm}\qed

Suppose that, in the preceding lemma, $\omega(1) \in A \otimes B$ is
a normalized state, and $f : B \otimes C \rightarrow I$ is an
effect, i.e, $0 \leq f \leq u_{BC}$. Then, in operational terms, the
Lemma says that the mapping $\hat{\omega} \circ \hat{f}$ is
represented, within the category $\C$, by the composite morphism $(f
\otimes \id{C}) \circ (\id{A} \otimes \omega)$.\footnote{
  Technically we are relying on the isomorphisms between $A \iso \C
  (I,A)$ and $A^\# \iso \C (A,I)$ to guarantee that the internal
  representation of $\homega  \circ \hat{f}$ defines the right linear
  map.
}
In other words: preparing $BC$
in joint state $\omega$, and then measuring $AB$ and obtaining $f$,
guarantees that the ``un-normalized conditional state" of $C$ is
$\hat{\omega}(\hat{f}(\alpha))$, where $\alpha$ is the state of $A$.

\begin{remark}\label{remark:classicalcond-not-statecollapse}
  An important point here is that any process that factors as
  $\hat{\omega} \circ \hat{f}$ can be simulated by a remote evaluation protocol,
  using what amounts to {\em classical conditioning} -- in particular,
  without need to invoke any mysterious ``collapse" of the state,
  nor for that matter, any other {\em physical} dynamics at all.
\end{remark}

\subsection{Teleportation, conditional dynamics and compact closure}

\label{sec:telep-cond-dynam} Suppose that, in the remote evaluation
protocol of Lemma \ref{lemma:remote-eval-2}, we have $C = A$.
Suppose further that the mapping $\hat{\omega} : B^{\#} \rightarrow
A$ has a right inverse --- that is, suppose there exists a positive
linear map $\hat{r}:A \to B^\#$ such that $\homega \circ \hat{r} =
\id{A}$.   Then we can re-scale $\hat{r}$ to obtain an effect $f$ on
$A \otimes B$ by
\[
f(\alpha , \beta ) = c \hat{r}(\alpha)(\beta),
\]
for a small enough positive constant $c$.  Upon obtaining the result $f$ in
a measurement on $A \otimes B$ when the composite system is in state
$\alpha \otimes \omega$, the un-normalized conditional state of $C$
is:
\[
\hat{\omega}(\hat{f}(\alpha)) = c \alpha.
\]
The {\em normalized} conditional state will be exactly $\alpha$.  This
is what is meant, in quantum-information theory, by a conclusive,
correction-free {\em teleportation protocol}. Adopting this language,
we will say that it is possible to {\em teleport system $A$ through
  system $B$} if and only if there exists such a pair $\homega,
\hat{r}$.

If $\homega$ is in fact an isomorphism $A^\# \iso B$, then $\hat{r} =
\homega^{-1}$, and system $B$ can also be teleported through system $A$.
When this is the case, Lemma \ref{lemma:remote-eval-2}
tells us that $\omega : I \rightarrow B \otimes A$ and $ f : A \otimes
B \rightarrow I$ with $\hat{f} = \hat{\omega}^{-1}$, provide
respectively a unit and co-unit making $(B,\omega,f)$ a dual for
$A$. Thus, a compact closed category of COMs is exactly one in which
every system $A$ is paired with a second system $B = A'$, in such a way that
each system can be teleported through the other.

\begin{proposition} \label{prop:comcl-is-equiv-to-teleportation}
Let $\C$ be a monoidal category of COMs. The
  following are equivalent.
\begin{itemize}
\item[(a)] $\C$ is compact closed.
\item[(b)] Every $A \in \C$ can be teleported through some $B \in \C$,
  which in turn can be teleported through $A$.
\item[(c)] Every morphism in $\C$ has the form $\hat{\omega} \circ
  \hat{f}$ for some bipartite state $\omega$ and bipartite effect $f$.
\end{itemize}
\end{proposition}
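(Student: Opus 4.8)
The plan is to prove the cycle (a) $\Rightarrow$ (c) $\Rightarrow$ (b) $\Rightarrow$ (a). The discussion immediately preceding the Proposition already supplies most of the ideas; the task is to organize them carefully and to fill the one genuine gap, namely producing a bipartite \emph{effect} (not merely a positive map into $B^\#$) in the implication (b) $\Rightarrow$ (a), and conversely extracting teleportation data from an abstract compact structure in (a) $\Rightarrow$ (b).

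First I would do (a) $\Rightarrow$ (c). Assume $\C$ is compact closed, so every $A$ has a dual $(A',\eta_A,\epsilon_A)$. Given an arbitrary morphism $\phi : A \rightarrow B$ in $\C$, the strategy is the standard ``name/coname'' trick: set $\omega := (\id{B}\otimes\eta_A)\circ$ (suitably reassociated) so that $\omega : I \rightarrow B\otimes A'$ is a state, and $f := \epsilon_A\circ(\id{A'}\otimes\phi) : A\otimes A' \rightarrow I$... wait — one must be careful about which slots carry $A$ versus $A'$, matching the conventions in \eqref{eq:remote-eval-again} and \eqref{eq:remote-eval-again-dual}. Concretely, take $\omega := (\phi\otimes\id{A'})\circ\eta_{A'}\in\C(I, B\otimes A')$ using the dual $(A',\eta_{A'},\epsilon_{A'})$ of $A'$ (recall $A''\cong A$), and $f := \epsilon_{A'} : A'\otimes A'' \rightarrow I$ composed with the isomorphism $w_A$; then Lemma \ref{lemma:remote-eval-2} applied with $C=B$ gives $\hat\omega\circ\hat f = (f\otimes\id{B})\circ(\id{A}\otimes\omega)$, and a diagram chase using the triangle identities (1) collapses this to $\phi$. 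The bookkeeping with $w_A$ and the symmetry isomorphisms is the fiddly part, but it is routine.

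Next, (c) $\Rightarrow$ (b): apply the hypothesis to $\phi = \id{A}$. This yields a bipartite state $\omega\in\C(I,B\otimes A)$ and effect $f\in\C(A\otimes B, I)$ for some $B$, with $\hat\omega\circ\hat f = \id{A}$. By the remark after Definition \ref{defn:composite-state-space} and the closure-under-conditioning discussion, $\hat\omega : B^\# \rightarrow A$ and $\hat f : A \rightarrow B^\#$ are COM morphisms, and $\hat\omega\circ\hat f=\id{A}$ exhibits $\hat f$ as a right inverse to $\hat\omega$ — so, setting $\hat r := \hat f$ (up to rescaling $f$ by a constant $c$ to make it a sub-normalized effect, exactly as in the teleportation paragraph above), system $A$ can be teleported through $B$. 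For the ``in turn'' clause: since $\hat\omega$ has a right inverse and (by finite-dimensionality) $\dim B^\# = \dim B = \dim A$, the map $\hat\omega$ is actually an isomorphism $B^\#\xrightarrow{\sim} A$; its inverse $\hat\omega^{-1}=\hat f$ is positive, so $B$ teleports through $A$ via the same $\omega$ read in the opposite order (Lemma \ref{lemma:remote-eval-2}, second identity).

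Finally, (b) $\Rightarrow$ (a): this is the implication the authors essentially spell out in the paragraph before the Proposition. Given $A$ teleported through $B$ and $B$ through $A$, we get $\hat\omega : B^\# \rightarrow A$ an isomorphism with positive inverse $\hat r$; rescale $\hat r$ to a genuine effect $f$ on $A\otimes B$ with $\hat f = c\,\hat\omega^{-1}$. Then Lemma \ref{lemma:remote-eval-2} shows $(f\otimes\id{A})\circ(\id{A}\otimes\omega)$ and $(\id{B}\otimes f)\circ(\omega\otimes\id{B})$ compute $\hat\omega\circ\hat f$ and $\hat f^{\ast}\circ\hat\omega^{\ast}$ respectively, both scalar multiples of identities; absorbing the constant $c$ into one of $\omega$, $f$ makes $(B,\omega,f)$ satisfy the two snake equations (1), so $B$ is a dual for $A$ inside $\C$, and $\C$ is compact closed. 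I expect the main obstacle to be purely technical: keeping the tensor-slot orderings, the symmetry isomorphisms $\sigma_{A,B}$, and the constant $c$ consistent across the three implications, together with justifying — via the finite-dimensional local-tomography hypotheses — that the maps $\hat\omega,\hat f$ really are morphisms in $\C$ and not just in $\Ordlin$, so that the constructed dual lives in the category.
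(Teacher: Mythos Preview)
Your approach is essentially the paper's --- both hinge on Remote Evaluation together with the name/coname trick --- though you run the cycle (a)$\Rightarrow$(c)$\Rightarrow$(b)$\Rightarrow$(a) whereas the paper takes (a)$\Leftrightarrow$(b) as immediate from the preceding discussion and then shows (a)$\Rightarrow$(c) and (c)$\Rightarrow$(a). For (a)$\Rightarrow$(c) the paper's construction is cleaner than yours: given $\phi : A \to B$, it sets $\omega_\phi := (\id{A'} \otimes \phi)\circ\eta_A \in \C(I, A'\otimes B)$ and $f := \epsilon_A \in \C(A\otimes A', I)$, working directly with the dual of $A$; the first triangle identity then yields $\hat{\omega_\phi} \circ \hat{\epsilon_A} = \phi$ in a single diagram, with no need for $\eta_{A'}$, $\epsilon_{A'}$, $A''$ or $w_A$, and hence none of the slot-ordering bookkeeping you anticipate.

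There is a genuine gap in your (c)$\Rightarrow$(b). From $\hat\omega\circ\hat f = \id{A}$ you can conclude only that $\hat\omega : B^\# \to A$ is \emph{surjective}, hence $\dim B \geq \dim A$; your assertion that $\dim B = \dim A$ is unsupported, so you cannot deduce that $\hat\omega$ is an isomorphism or that $B$ teleports back through $A$ using the same pair $(\omega,f)$. The second snake identity $(\id{B}\otimes f)\circ(\omega\otimes\id{B})=\id{B}$ --- equivalently $\hat f\circ\hat\omega = \id{B^\#}$ --- does not follow from the first alone. In fairness, the paper's corresponding step (c)$\Rightarrow$(a) simply asserts ``it follows that $\hat\omega_A = \hat f_A^{-1}$'' without argument, so both treatments are equally brief at this point; your version merely makes the lacuna visible by attaching an incorrect dimension count to it.
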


\noindent{\em Proof:} The equivalence of (a) and (b) is clear from the
preceding discussion. To see that these are in turn equivalent to (c),
suppose first that (a) holds, and let $(A', \eta_A, e_A)$ be
the dual for $A$.   Suppose that  $\phi : A\to B$ is a morphism in
$\C$, and define $\omega_\phi = (\id{A'}\otimes \phi) \circ \eta_A.$  By
Remote Evaluation (Lemma \ref{lemma:remote-eval-2}), we have
\[
\hat{\omega_\phi}(\hat{e_A}(\alpha)) \
 = \ (e_A \otimes \id{B})\circ (\id{A} \otimes \omega_\phi) \circ \alpha
\]
for every $\alpha \in \C(I,A)$.  Since $\C$ is compact closed, the
following diagram commutes:
\[{\xymatrix@=12pt{
A\ar@{->}^{\id{A}}[rr]\ar@{->}_{\id{A}\otimes \eta_A}[rrdd] && A\ar@{->}^{\phi}[rr] && B\\
\\
&& A \otimes A' \otimes A \ar@{->}^{\;\id{A}\otimes \id{A'} \otimes
  \phi\;}[rr] \ar@{->}_{e_A \otimes \id{A}}[uu] && A \otimes A'
\otimes B\;, \ar@{->}_{e_A \otimes \id{B}}[uu]
}}\]
and hence $\hat{\omega_\phi}(\hat{e_A}(\alpha)) = \phi(\alpha)$.
Since $\C(I,A) \iso A$ we have $\phi = \hat{\omega_\phi} \circ
\hat{e_A}$ as required. Conversely, if (c) holds, then for each $A$, the
identity mapping $\id{A}$ factors as $\hat{\omega}_{A} \circ
\hat{f}_{A}$ for some $\omega_{A} \in B \otimes A$ and some $f \in A
\otimes B$. It follows that $\hat{\omega}_{A} = \hat{f}_{A}^{-1}$, so
this gives us a compact closed structure. $\Box$ \\

\section{Weakly Self-Dual Theories}
\label{sec:weakly-self-dual}

In a compact closed category $\C$, the internal adjoint $' : \C
\rightarrow \C$ described in Section 2 establishes an isomorphism
$\C \simeq \C^{\op}$. In particular, for every object $A$ in the
category, understood as a ``physical system", there is a distinguished
isomorphism between the system's state space $\C(I,A)$ and the space
$\C(A,I)$ of effects.

In contrast, a convex operational model $A$ is not generally
isomorphic to its dual. Indeed, there is a type issue: $A$ has, by
definition, a distinguished unit functional $u_A \in A^{\#}$; in order
for $A^{\#}$ to be treated as a COM, one would need to privilege a
{\em state} $\alpha_o \in A$ to serve as an order unit on $A^{\#}$.
Only in special cases is there a natural way of doing so.\footnote{When the
state space is sufficiently symmetric, there may be a natural choice of
state invariant under the symmetry group.  For example, if the base-preserving
automorphisms act transitively on the pure states, the state obtained by
group-averaging is the natural choice.}
Beyond this, there is the more fundamental problem that,
geometrically, the cones $A_{+}$ and $A^{\#}_{+}$ are generally not
isomorphic.  This said, those convex operational models that {\em are}
order-isomorphic to their duals are of considerable interest -- not
only because both classical and quantum systems exhibit this sort of
self-duality, but because it appears to be a strong constraint, in
some measure {\em characteristic} of these theories.

\subsection{Weak vs Strong Self-Duality}

A finite-dimensional ordered vector space $A$ (or its cone, $A_{+}$)
is said to be {\em self-dual} iff there exists an inner product --
that is, a positive-definite, hence also symmetric and
non-degenerate, bilinear form $\langle \ , \ \rangle$ -- on $A$ such
that
\[A_{+} = A^{+} := \{ a \in A | \langle a, x \rangle \geq 0 \ \forall x \in A_{+}\}.\]
In this case, we have $A \simeq A^{\ast}$, as ordered spaces, via the
canonical isomorphism $a \mapsto \langle a , . \rangle$.  A celebrated
theorem of Koecher and Vinberg \cite{Koecher58,Vinberg,FK} states
that if $A_{+}$ is both self-dual and {\em homogeneous}, meaning that
the group of order-automorphisms of $A$ acts transitively on the
interior of $A_{+}$, then $A$ is isomorphic, as an ordered space, to a
formally real Jordan algebra ordered by its cone of squares.  The
Jordan-von Neumann-Wigner \cite{JvNW} classification of such algebras
then puts us within hailing distance of quantum mechanics.

\begin{definition}\label{def:wsd-asp} A COM $(A,A^{\#},u_A)$ is {\em weakly self-dual}
(WSD) iff there exists an order-isomorphism $\phi : A \simeq
A^{\#}$. We shall say that $A$ is {\em symmetrically} self-dual iff $\phi$ can be so chosen that
$\phi(\alpha)(\beta) = \phi(\beta)(\alpha)$
for all $\alpha, \beta \in A$. \end{definition}

Note that, for a given linear map $\phi : A \rightarrow A^{\#}$, the
bilinear form $\langle \alpha, \beta \rangle := \phi(\alpha)(\beta)$
is non-degenerate iff $\phi$ is a linear isomorphism, and symmetric
iff $\phi = \phi^{\ast}$.  If we don't require saturation, {\em any}
finite-dimensional ordered linear space $A$ can serve as the state
space for a weakly self-dual COM, simply by setting $(A^{\#})_{+} =
\phi(A_+)$ for some nonsingular positive linear mapping $A
\rightarrow A^{\ast}$, and taking any point in the interior of
$(A^{\#})_+$ for $u_A$.  However in the saturated case, weak
self-duality is a real constraint on the geometry of the state cone,
although strong self-duality is an even stronger one.

\tempout{
\begin{lemma} Let $(A,A^{\#},u_A)$ be a finite-dimensional, symmetrically self-dual COM. Then $A^{\#} = A^{\ast}$, i.e.,
the model is saturated.\end{lemma}

\noindent{\em Proof:} Let $\phi : A \rightarrow A^{\#}$ be an
order-isomorphism defining a symmetric bilinear form $\langle \alpha ,
\beta \rangle := \beta(\phi(\alpha)) = \alpha(\phi(\beta))$. Then we
have Let
\[A^{+} = \{ \beta \in A | \langle \alpha, \beta \rangle \geq 0 \forall \alpha \in A_{+}\}.\]
Since $\phi$ takes $A_{+}$ onto $A^{\#}_{+}$, $\beta \in A^{+}$ iff
$a(\beta) \geq 0$ for all $a \in A^{\#}_{+}$ -- in other words,
$A^{+} = (A^{\#})^{\ast}_{+}$ (here exploiting the
finite-dimensionality of $A$ to identify $A$ with $A^{\ast \ast}$).
Now, since $\langle \ , \ \rangle$ is symmetric, we also have $\beta
\in A^{+}$ iff $\phi(\beta) \in A^{\#}_{+}$, whence, $A^{+} =
\phi^{-1}(A^{\#}_{+}) = A_{+}$. An application of the
(finite-dimensional) Hahn-Banach theorem now tells us that that
$A^{\ast} = A^{\#}$, i.e., that $A$ is saturated. $\Box$\\}

Note that, for a given linear map $\phi : A \rightarrow A^{\#}$, the
bilinear form $\langle \alpha, \beta \rangle := \phi(\alpha)(\beta)$
is non-degenerate iff $\phi$ is a linear isomorphism,  and symmetric
iff $\phi = \phi^{\ast}$. Thus, $A$ will be self-dual, in the
classical sense described above, iff $\langle \ , \ \rangle$ is
positive-definite, {\em and} $A^{\#} = A^{\ast}$, i.e, $A$ is
saturated. To emphasize the distinction, we shall henceforth refer
to this situation as {\em strong} self-duality.

If $\phi : A \simeq A^{\#}$  is an order-isomorphism implementing $A$'s
self-duality, then $\phi^{-1} : A^{\#} \simeq A$ defines an un-normalized bipartite
non-signaling state $\gamma$ in $A \maxtensor A$ with $\phi^{-1} = \hat{\gamma}$
-- that is, $\gamma(a,b) = \phi^{-1}(a)(b)$. Following \cite{BGW09},
we shall call such a state an {\em isomorphism state}. It is shown
in \cite{BGW09} that such a state is necessarily pure in $A
\maxtensor B$.\footnote{%
  Strictly speaking, \cite{BGW09} deals with
  the case in which $A$ and $B$ are saturated, but the proof is easily
  extended to the general case.}
In this language, {\em $A$ is
WSD iff $A \maxtensor B$ contains an isomorphism state}.

 \begin{example}
Let $A$ be the convex operational model of a basic quantum-mechanical
system, i.e., the space of self-adjoint operators associated with the
system's Hilbert space $\H$. The standard {\em maximally entangled state}
on $A \otimes A$ is the pure state associated with the unit vector
\[
\Psi = \frac{1}{\sqrt{d}}\sum_{i} x_i \otimes x_i
\]
where $\{x_1,...,x_n\}$ is an orthonormal basis for $\H$. Using this,
one has a mapping
\[
R : T \mapsto R_{T} := (T \otimes \1)P_{\Psi}
\]
taking operators $T: \H \rightarrow \H$ to operators ${\cal B}(\H
\otimes \H)$. It is a basic result, due to Choi and, independently,
Jamiolkowski, that this is a linear isomorphism, taking the cone of
completely positive maps on ${\cal B}(\H)$ onto the cone of positive
operators on $\H \otimes \H$. Note that $R^{-1}$ maps
\cite{Keyl-Werner} $\rho$ to $T_{\rho}$ where the latter is given by
\[
 \langle x | T_{\rho} (\sigma) y\rangle
 = d \Tr(\rho ((\ket{y}\bra{x}) \otimes \sigma^T))
\]
where $\ket{y}\bra{x}$ is the operator $z \mapsto \langle z, x \rangle y$,
and the transpose is defined relative to the chosen orthonormal
basis. This gives us a state $\gamma \in A \otimes A$ with
$\hat{\gamma} : A^{\ast} \simeq A$, namely,
\[
\hat{\gamma}(a)(b)
= \Tr(P_{\Psi}(a \otimes b))
= \Tr(P_{\Psi}(a \otimes \1) (\1 \otimes b)).
\]
\end{example}

\subsection{Categories of self-dual COMs}

\label{sec:categories-self-dual} Let $\C$ be a monoidal category of
COMs, as described in Section \ref{sec:mono-prob-theor}. There is a
distinction between requiring that a state space $A \in \C$ be
weakly self-dual, which
implies only that there exist an order-isomorphism $A^{\#} \simeq A$    
-- an isomorphism in $\Ordlin$ -- and requiring that this correspond
to an (un-normalized) state $\gamma \in (AA)_+$, hence, to an element
of $\C(I,A \otimes A)$.  We now focus on categories in which this latter condition
holds for every system.

\begin{definition} \label{def:wsd-cat} A symmetric monoidal category
$\C$ of COMs is {\em weakly self-dual} (WSD) iff for every $A \in
\C$, there
  exists a pair $(\gamma_A, f_A)$ consisting of a bipartite state
  $\gamma_A \in A \otimes A$ and a positive functional $f_A \in
  (A\otimes A)^{\#} = \C(A \otimes A,I)$ (a multiple of an effect)
  such that (i) $\gamma_A$ is an isomorphism state, and (ii)
  $\hat{f}_A = \hat{\gamma}_{A}^{-1}$.  If $\gamma_A$ can be chosen to
  be symmetric for every $A \in \C$, we shall say that $\C$ is {\em symmetrically self-dual} (SSD).
\end{definition}

Note that this is stronger than merely requiring every COM $A \in
\C$ to be weakly self-dual.  Equivalently, we may say that category
$\C$ of COMs is WSD iff every system can be equipped with a
designated conclusive, correction-free teleportation protocol
$\gamma_A \in AA, f_A \in (AA)^{\#}$, whereby $A$ can be teleported
``through itself''. Thus, by Proposition
\ref{prop:comcl-is-equiv-to-teleportation}, we have:

\begin{theorem} \label{cor:wsd-cat-compact-closed}
A monoidal category $\C$ of convex operational models is weakly
self-dual iff it is compact closed, and can be equipped with a compact
structure such that $A' = A$ for all objects $A \in \C$.
\end{theorem}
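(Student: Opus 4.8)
The plan is to read both directions off Proposition~\ref{prop:comcl-is-equiv-to-teleportation}, together with the observation (made in the discussion preceding it) that an isomorphism state paired with an effect implementing its inverse is precisely a unit/co-unit pair exhibiting a dual. By Definition~\ref{def:wsd-cat}, weak self-duality of $\C$ says exactly that each object $A$ carries a conclusive, correction-free teleportation protocol through a copy of itself, so the content of the theorem is the (essentially formal) identification of this data with a degenerate compact structure. Throughout I would use the standing finite-dimensionality hypothesis, via the identifications $A^{\#}=A^{\ast}$, $\C(I,A)\iso A$, $\C(A,I)\iso A^{\#}$, $\C(I,A\otimes A)\iso A\otimes A$ and $\C(A\otimes A,I)\iso(A\otimes A)^{\#}$ of Definition~\ref{defn:category-of-COMs}, to move freely between morphisms and the linear (and order) data they encode.

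For the forward implication, assume $\C$ is WSD, fix $A$, and take $(\gamma_A,f_A)$ as in Definition~\ref{def:wsd-cat}. Regarding $\gamma_A$ as a morphism $\eta_A:I\to A\otimes A$ and $f_A$ as a morphism $\epsilon_A:A\otimes A\to I$, I would show that $(A,\eta_A,\epsilon_A)$ is a dual for $A$. The key is that $\hat{\gamma}_A:A^{\#}\to A$ is an order-isomorphism with $\hat{f}_A=\hat{\gamma}_A^{-1}$, so $\hat{\gamma}_A\circ\hat{f}_A=\id{A}$ and $\hat{f}_A\circ\hat{\gamma}_A=\id{A^{\#}}$. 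Applying Lemma~\ref{lemma:remote-eval-2} with $\omega=\eta_A$, $f=\epsilon_A$ and $B=C=A$ shows that the composite $(\epsilon_A\otimes\id{A})\circ(\id{A}\otimes\eta_A)$, evaluated on any state $\alpha\in\C(I,A)$, returns $\hat{\gamma}_A(\hat{f}_A(\alpha))=\alpha$; and the companion identity \eqref{eq:remote-eval-ii-bis} shows that $(\id{A}\otimes\epsilon_A)\circ(\eta_A\otimes\id{A})$ returns $\hat{\gamma}_A^{\ast}(\hat{f}_A^{\ast}(\beta))=\beta$ on each $\beta\in\C(I,A)$, the adjoints $\hat{\gamma}_A^{\ast},\hat{f}_A^{\ast}$ being mutually inverse. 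Since $\C(I,A)\iso A$ and both composites are linear, the two triangle identities of Section~\ref{sec:comp-clos-categ} follow, so $(A,\eta_A,\epsilon_A)$ is a dual with $A'=A$; carrying this out for every object equips $\C$ with a degenerate compact structure, and in particular $\C$ is compact closed.

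For the converse, suppose $\C$ has a compact structure with $A'=A$ for every $A$; fix $A$ with unit $\eta_A:I\to A\otimes A$ and co-unit $\epsilon_A:A\otimes A\to I$, and set $\gamma_A:=\eta_A\in(A\otimes A)_+$ and $f_A:=\epsilon_A\in(A\otimes A)^{\#}_+$ (both positive, being morphisms of $\C$, whence the conditioning maps $\hat{\gamma}_A:A^{\#}\to A$ and $\hat{f}_A:A\to A^{\#}$ are positive). Feeding the first triangle identity for $(A,\eta_A,\epsilon_A)$ into Lemma~\ref{lemma:remote-eval-2} exactly as above yields $\hat{\gamma}_A\circ\hat{f}_A=\id{A}$, and feeding the second into \eqref{eq:remote-eval-ii-bis} yields $\hat{\gamma}_A^{\ast}\circ\hat{f}_A^{\ast}=\id{A}$, hence $\hat{f}_A\circ\hat{\gamma}_A=\id{A^{\#}}$; so $\hat{\gamma}_A$ is a linear isomorphism with inverse $\hat{f}_A$, and since both it and its inverse are positive it is an order-isomorphism. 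Thus $\gamma_A$ is an isomorphism state with $\hat{f}_A=\hat{\gamma}_A^{-1}$, i.e. $(\gamma_A,f_A)$ witnesses weak self-duality at $A$; as $A$ was arbitrary, $\C$ is WSD. (More quickly: weak self-duality is literally the $B=A$ instance of condition~(b) of Proposition~\ref{prop:comcl-is-equiv-to-teleportation}, and the witnessing pair $(\gamma_A,f_A)$ \emph{is} the degenerate compact structure, so the theorem can simply be deduced from that proposition.)

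I do not expect a serious obstacle. The two points that need care are (i) keeping the variances straight in the two invocations of Lemma~\ref{lemma:remote-eval-2}, so that one genuinely recovers \emph{both} triangle identities --- one from \eqref{eq:remote-eval-ii}, one from \eqref{eq:remote-eval-ii-bis} --- rather than the same identity twice; and (ii) upgrading the resulting one-sided (hence, by finite dimensionality, two-sided) linear inverse to an \emph{order}-isomorphism, which is where the positivity of the conditioning maps and the identification $A^{\#}=A^{\ast}$ enter. Everything else is a matter of unwinding Definitions~\ref{defn:category-of-COMs} and~\ref{def:wsd-cat}.
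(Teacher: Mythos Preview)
Your proposal is correct and takes essentially the same approach as the paper: the paper's entire proof is the single sentence ``Thus, by Proposition~\ref{prop:comcl-is-equiv-to-teleportation}'' (together with the immediately preceding observation that the WSD data $(\gamma_A,f_A)$ is precisely a teleportation protocol for $A$ through itself, hence a unit/co-unit pair with $A'=A$), which is exactly the shortcut you spell out in your final parenthetical remark. Your longer argument via Lemma~\ref{lemma:remote-eval-2} simply unpacks the content of that proposition and the discussion leading up to it.
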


Recall that any morphism $\phi$ in a compact closed
category has a categorial adjoint, $\phi'$.  In the context of a WSD category
of COMs, this has a useful interpretation in terms of the linear adjoint, $\phi^{\ast}$:

\begin{lemma} \label{lem:wsd-canonical-adjoint}
Let $\C$ be any WSD category of ASPs, regarded as compact closed as
above. Let $\phi : A \rightarrow B$. Then the canonical adjoint
mapping $\phi' : B \rightarrow A$ is given by
\[
\phi' = (\hat{f}_{B} \circ \phi \circ \hat{\gamma}_{A})^*
      = \hat{\gamma}_{A}^* \circ \phi^* \circ \hat{f}_{B}^*
\]
\end{lemma}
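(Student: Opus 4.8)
The plan is to compute the categorical adjoint $\phi'$ directly from its definition (diagram (2) in the excerpt), specialized to the degenerate compact structure supplied by the WSD hypothesis, namely $A' = A$ with unit $\eta_A = \gamma_A$ and co-unit $\epsilon_A = f_A$. So I would start by writing out the defining composite: $\phi' : B \to B'\otimes A \otimes B' \to A'\otimes B \otimes B' \to A'$, which in the degenerate case reads $\phi' = (\id{A}\otimes \epsilon_B)\circ(\id{A}\otimes\phi\otimes\id{B})\circ(\eta_A\otimes\id{B})$, i.e. $\phi' = (\id{A}\otimes f_B)\circ(\id{A}\otimes\phi\otimes\id{B})\circ(\gamma_A\otimes\id{B})$ as a map $B\to A$.

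Next I would translate this categorical composite into the concrete language of conditioning maps, using the dictionary established in Section~\ref{sec:remote-eval-again}: a bipartite state $\gamma_A : I \to A\otimes A$ corresponds to $\hat\gamma_A : A^\# \to A$, and a bipartite effect $f_B : B\otimes B \to I$ corresponds to $\hat f_B : B \to B^\#$. The key observation is that the composite above is exactly an instance of the Remote Evaluation pattern of Lemma~\ref{lemma:remote-eval-2} — reading the inputs/outputs carefully, feeding a state $\beta\in\C(I,B)$ through, one gets that $\phi'$ applied to $\beta$ coincides with $\hat\gamma_A\circ(\phi^*\,\text{composed appropriately})\circ\hat f_B$ evaluated at $\beta$. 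Concretely I expect to show that $\phi' = \hat\gamma_A^{*}\circ\phi^{*}\circ\hat f_B^{*}$ by chasing the diagram: the $\id{A}\otimes f_B$ at the end is $\hat f_B^{*}$ acting on the $B$-slot (since $\hat f_B : B\to B^\#$, its linear adjoint $\hat f_B^{*}$ maps $(B^\#)^{*}\to B^{*}$, but in finite dimensions with the identifications $\C(I,A)\cong A$, $\C(A,I)\cong A^\#$ this is precisely the bookkeeping that makes the slots match), the middle is $\phi^{*}$, and the $\gamma_A\otimes\id{B}$ at the front is $\hat\gamma_A^{*}$. Then the stated equality of the two forms, $(\hat f_B\circ\phi\circ\hat\gamma_A)^{*} = \hat\gamma_A^{*}\circ\phi^{*}\circ\hat f_B^{*}$, is just contravariance of the linear adjoint $(-)^{*}$ applied to the composite $\hat f_B\circ\phi\circ\hat\gamma_A : A^\#\to B^\#\to\ldots$, so once one form is established the other is immediate.

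I would organize the write-up as: (1) instantiate diagram (2) with $A'=A$, $B'=B$, $\eta=\gamma$, $\epsilon=f$; (2) identify the resulting composite with the left-hand side of the Remote Evaluation identity \eqref{eq:remote-eval-ii-bis} (or its analogue), reading off that $\phi'(\beta) = \widehat{\gamma_A}^{*}(\widehat{(f_B\circ\phi)}^{*}(\beta))$-type expression; (3) unwind the hatted maps into plain linear adjoints using the COM dictionary and the hypotheses $\hat f_A = \hat\gamma_A^{-1}$ where needed; (4) conclude via contravariant functoriality of $(-)^{*}$. The main obstacle I anticipate is purely the slot-tracking/type-chasing: the maps $\hat\gamma_A$ and $\hat f_B$ go between $A^\#$, $A$, $B$, $B^\#$, and one must be scrupulous about which space is being identified with which dual (using finite-dimensionality and the defining isomorphisms $\C(I,A)\cong A$, $\C(A,I)\cong A^\#$ of a category of COMs) so that the composite $\hat f_B\circ\phi\circ\hat\gamma_A$ actually typechecks and its linear adjoint lands as $B\to A$. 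No single step is deep; the work is in getting the variances and identifications consistent, after which the Remote Evaluation lemma does essentially all the real work and the final equality is a one-line appeal to $(g\circ f)^{*} = f^{*}\circ g^{*}$.
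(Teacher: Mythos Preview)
Your proposal is correct and follows essentially the same route as the paper: instantiate the defining composite for $\phi'$ with $\eta_A=\gamma_A$, $\epsilon_B=f_B$, feed in $\beta\in\C(I,B)$, and use the remote-evaluation identity to rewrite the result as $\hat\gamma_A^{*}(\phi^{*}(\hat f_B^{*}(\beta)))$; the paper carries out the same computation but inlines the remote-evaluation step via the auxiliary state $\omega=(\id{A}\otimes\phi)\circ\gamma_A$ rather than citing Lemma~\ref{lemma:remote-eval-2} by name. One minor remark: the hypothesis $\hat f_A=\hat\gamma_A^{-1}$ is not actually invoked anywhere in this computation---it is what makes $(\gamma_A,f_A)$ a compact structure in the first place, but the displayed formula is a pure unwinding of the definition of $\phi'$ and holds for any choice of $\gamma_A$ and $f_B$.
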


\begin{proof}
  We must show that, for any $\beta \in B$ -- that is, any $\beta \in
  \C(I,B)$ -- we have $\phi'(\beta) := \phi^{*} \circ \beta =
  \hat{\gamma}_{A}^*(\phi^{*}(\hat{f}_{B}^*(\beta)))$, where $\phi^* :
  B^{\#} \rightarrow A^{\#}$ is the {\em linear} adjoint. Let $\omega
  := (\id{A} \otimes \phi) \circ \gamma_A : I \rightarrow A \otimes
  B$. Then we have
  \begin{eqnarray*}
    \phi'(\beta) & = & \phi' \circ \beta \\
    & = &  (\id{A} \otimes f_B) \circ (\id{A} \otimes \phi \otimes \id{B}) \circ (\gamma_A \otimes \id{B}) \circ \beta\\
    & = & (\id{A} \otimes f_B) \circ (\id{A} \otimes \phi \otimes
    \beta) \circ \gamma_A \\
    & = & (\id{A} \otimes (f_B \circ (\id{B}\otimes \beta))) \circ
    ((\id{A} \otimes \phi) \circ \gamma_A) \\
    & = & (\id{A} \otimes \hat{f}^* (\beta))  \circ \omega )\\
    & = & \hat{\omega}^{*}(\hat{f}^{*}_B(\beta)).
  \end{eqnarray*}
  Now, for any $b : B \rightarrow I$, we have
  \begin{eqnarray*}
    \hat{\omega}^{*}(b) & = & \widehat{(\sigma_{B,B} \circ \omega)}(b)\\
    & = & (\id{A} \otimes b) \circ \omega \\
    & = & (\id{A} \otimes b) \circ (\id{A} \otimes \phi) \circ \gamma_A\\
    & = & (\id{A} \otimes (b \circ \phi)) \circ \gamma_A \\
    & = & (\id{A} \otimes \phi^{*}(b)) \circ \gamma_A\\
    & = & \hat{\gamma}^{*}_A(\phi^{*}(b)).
  \end{eqnarray*}
  With $b = \hat{f}^{*}_B(\beta)$, this gives the desired result.
\end{proof}

\begin{corollary} \label{cor:unit-counit-symmetry}
For all $A \in \C$, $f_{A}' = \sigma_{A,A} \circ \gamma_{A}$.
\end{corollary}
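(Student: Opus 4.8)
The plan is to compute $f_A'$ using Lemma \ref{lem:wsd-canonical-adjoint}, applied to the morphism $f_A : A\otimes A \to I$ (taking $A\otimes A$ in place of $A$, and $I$ in place of $B$, in the statement of that lemma), and then to recognize the resulting element of $A\otimes A$, read as a bilinear form on $A^{\#}\times A^{\#}$, as the one represented by $\sigma_{A,A}\circ\gamma_A$.

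First I would fix the compact structure furnished by Theorem \ref{cor:wsd-cat-compact-closed}. By the Remark in Section \ref{sec:comp-clos-categ} we may take the dual of $A\otimes A$ to be the canonical product dual, so that $\gamma_{A\otimes A} = \tau\circ(\gamma_A\otimes\gamma_A)$ with $\tau = 1_A\otimes\sigma_{A,A}\otimes 1_A$; a short computation on product functionals then shows $\hat\gamma_{A\otimes A}(a\otimes b) = \hat\gamma_A(a)\otimes\hat\gamma_A(b)$ for all $a,b\in A^{\#}$, and, since in finite dimensions $(A\otimes A)^{\#} = A^{\#}\otimes A^{\#}$ as vector spaces, this determines $\hat\gamma_{A\otimes A}$. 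The isomorphism state of $I$ is the trivial one, so $\hat f_I = \id{I}$ as well.

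With this in hand, Lemma \ref{lem:wsd-canonical-adjoint} gives $f_A' = (\hat f_I\circ f_A\circ\hat\gamma_{A\otimes A})^{*} = (f_A\circ\hat\gamma_{A\otimes A})^{*}$, which I regard as an element of $A\otimes A$. Evaluating the functional $f_A\circ\hat\gamma_{A\otimes A}$ on a product $a\otimes b$, using the factorization just noted, the defining relation $\hat f_A(\alpha)(\beta) = f_A(\alpha\otimes\beta)$, and the hypothesis $\hat f_A = \hat\gamma_A^{-1}$ (so that $\hat f_A\circ\hat\gamma_A = \id{A^{\#}}$), I obtain
\[
(f_A\circ\hat\gamma_{A\otimes A})(a\otimes b) \;=\; f_A\bigl(\hat\gamma_A(a)\otimes\hat\gamma_A(b)\bigr) \;=\; \hat f_A\bigl(\hat\gamma_A(a)\bigr)\bigl(\hat\gamma_A(b)\bigr) \;=\; a\bigl(\hat\gamma_A(b)\bigr) \;=\; \gamma_A(b,a).
\]
Thus $f_A'$, as a bilinear form on $A^{\#}\times A^{\#}$, is $(a,b)\mapsto\gamma_A(b,a)$; but that is precisely the form represented by $\sigma_{A,A}\circ\gamma_A$, since interchanging the two tensor factors of an element of $A\otimes A$ interchanges the two arguments of the associated bilinear form. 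As morphisms $I\to A\otimes A$ are determined by the corresponding elements of $A\otimes A$ (condition (iii) for a category of COMs), this gives $f_A' = \sigma_{A,A}\circ\gamma_A$.

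The only delicate point is the slot-bookkeeping: verifying that, with the canonical product dual, the tensor factor of $\gamma_{A\otimes A}$ on which $f_A$ acts is arranged so that the factorization $\hat\gamma_{A\otimes A}(a\otimes b) = \hat\gamma_A(a)\otimes\hat\gamma_A(b)$ holds with no stray transposition, and that replacing the given dual of $A\otimes A$ by this one is harmless --- which the cited Remark guarantees. An essentially equivalent alternative would bypass Lemma \ref{lem:wsd-canonical-adjoint} and instead unwind the definition of the adjoint functor $'$ directly, writing $f_A' = (\id{A\otimes A}\otimes f_A)\circ\eta_{A\otimes A}$ and carrying out the same contraction; I would keep the lemma-based version, as it isolates the single nontrivial input, namely the teleportation identity $\hat f_A\circ\hat\gamma_A = \id{A^{\#}}$.
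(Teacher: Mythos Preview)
Your proof is correct and follows essentially the same route as the paper's: both apply Lemma~\ref{lem:wsd-canonical-adjoint} to $f_A : A\otimes A \to I$, invoke the factorization $\hat\gamma_{A\otimes A} = \hat\gamma_A\otimes\hat\gamma_A$ coming from the canonical product dual, and then carry out the same bilinear-form calculation $f_A(\hat\gamma_A(a),\hat\gamma_A(b)) = a(\hat\gamma_A(b)) = \gamma_A(b,a)$ using $\hat f_A = \hat\gamma_A^{-1}$. You are somewhat more explicit than the paper about justifying the tensor factorization of $\hat\gamma_{A\otimes A}$ and the freedom to choose the product dual, which is to the good.
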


\begin{proof}
  Note first that $f^{*}_A(1) = f_A \in (A \tensor A)^{\#}$. Thus, the
  preceding Lemma gives us
  \[
  f_{A}'(1) = (\hgamma_{A \otimes A}^{*} \circ f_{A}^{*} \circ
  \hf^{*}_{I})(1)
  \]
  Since $f_{I} = f_{I}^{*} = \id{I}$, we have
  \[
  f_{A}'(1) = (\hgamma_{A}^{*} \otimes \hgamma_{A}^{*})(f_A).
  \]
  Thus, for every $a, b \in A^{\#}$, we have
  \begin{eqnarray*}
    f_{A}'(1)(a,b)
    & = & (\hgamma_{A}^{*} \otimes \hgamma_{A}^{*})(f_{A})(a,b) \\
    & = & f_{A}(\hgamma_{A}(a),\hgamma_{A}(b)) \\
    & = & \hf_{A}^{*}(\hgamma_{A}(b))(\hgamma^{*}_{A}(a)) \\
    & = & b (\hgamma_{A}^{*}(a)) \\
    & = & \gamma_{A}(b,a) \\
    & = & (\sigma \circ \gamma_{A})(a,b).
  \end{eqnarray*}
\end{proof}

\begin{theorem}\label{thm:symmetryWSD-daggercompact}
Let $\C$ be a {\em symmetrically} self-dual monoidal category of COMs.
Then $\C$ is dagger compact with $\dagger$ given by the canonical
adjoint $' : \C^{\op} \rightarrow \C$.
\end{theorem}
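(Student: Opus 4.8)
The plan is to take the dagger to be the canonical adjoint $'$ supplied by the compact closed structure and to verify, in turn, the three ingredients of a dagger compact category: that $'$ is an involutive contravariant endofunctor fixing objects, that it is compatible with the monoidal structure, and that it satisfies the dagger compact triangle. By Theorem~\ref{cor:wsd-cat-compact-closed}, $\C$ is compact closed and may be given the degenerate compact structure with $A' = A$, unit $\gamma_A : I \to A \otimes A$ and co-unit $f_A : A \otimes A \to I$ the chosen teleportation data; symmetric self-duality lets us take each $\gamma_A$ symmetric, i.e.\ $\sigma_{A,A}\circ\gamma_A = \gamma_A$, equivalently $\hgamma_A^{*} = \hgamma_A$ as maps $A^{\#}\to A$. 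Since $'$ is already contravariant and fixes objects, the one point to establish before calling it a dagger is \emph{involutivity}, $\phi'' = \phi$.

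For involutivity I would run the description of the adjoint in Lemma~\ref{lem:wsd-canonical-adjoint} through itself. That lemma gives $\phi' = \hgamma_A^{*}\circ\phi^{*}\circ\hf_{B}^{*}$ for $\phi:A\to B$; since $\hf_B = \hgamma_B^{-1}$ and, by symmetry, $\hgamma_B^{*}=\hgamma_B$, this is simply $\phi' = \hgamma_A\circ\phi^{*}\circ\hgamma_B^{-1}$, the ordinary linear adjoint conjugated by the self-duality isomorphisms. Applying this twice, using that linear adjointing is involutive with $(\psi\circ\chi)^{*}=\chi^{*}\circ\psi^{*}$ and invoking symmetry once more,
\[
\phi'' = \hgamma_B\circ\bigl(\hgamma_A\circ\phi^{*}\circ\hgamma_B^{-1}\bigr)^{*}\circ\hgamma_A^{-1}
      = \hgamma_B\circ\hgamma_B^{-1}\circ\phi\circ\hgamma_A\circ\hgamma_A^{-1} = \phi .
\]
Thus $' = {\dagger}$ is a genuine dagger; as the opening remarks anticipated, its involutivity is in fact equivalent to symmetry of the units, though only this implication is needed here.

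For dagger-monoidality, working (as we may) in a strict monoidal category makes the associator and unitors identities, hence trivially unitary, so it remains to check that $\sigma$ is unitary and that $'$ respects $\otimes$. I would choose the teleportation datum on each composite to be the canonical one built from the factors — as used in the proof of Corollary~\ref{cor:unit-counit-symmetry} — so that, under the local-tomography identification $(A\otimes B)^{\#} = A^{\#}\otimes B^{\#}$, one has $\hgamma_{A\otimes B} = \hgamma_A\otimes\hgamma_B$ and likewise for $\hf$. This is legitimate: since $\otimes$ is a functor on $\C$ and the rearranging isomorphism of the Remark in Section~\ref{sec:comp-clos-categ} lies in $\C$, this datum really is a state, respectively effect, of the composite, and it remains symmetric, so the SSD hypothesis is preserved. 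Then $\phi' = \hgamma_A\circ\phi^{*}\circ\hgamma_B^{-1}$ together with $(\phi\otimes\psi)^{*} = \phi^{*}\otimes\psi^{*}$ gives $(\phi\otimes\psi)' = \phi'\otimes\psi'$ at once, and the same computation for $\sigma$, using naturality of the symmetry and that $\hgamma$ and $\hf$ are mutually inverse, yields $\sigma_{A,B}' = \sigma_{B,A} = \sigma_{A,B}^{-1}$. Finally, the dagger compact triangle is the identity $\gamma_A = \sigma_{A,A}\circ f_A^{\dagger}$ (with $\epsilon_A = f_A$): Corollary~\ref{cor:unit-counit-symmetry} gives $f_A' = \sigma_{A,A}\circ\gamma_A$, and symmetry of $\gamma_A$ collapses the composite, $\sigma_{A,A}\circ f_A' = \sigma_{A,A}\circ\sigma_{A,A}\circ\gamma_A = \gamma_A$.

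The involutivity identity and the dagger compact triangle are each only a couple of lines once Lemma~\ref{lem:wsd-canonical-adjoint} and Corollary~\ref{cor:unit-counit-symmetry} are available, so I do not expect either to be the real obstacle. The main labor — and the only genuinely delicate point — is the dagger-monoidality bookkeeping: justifying carefully that the teleportation data on tensor products can be taken canonically from the factors while staying symmetric, and verifying by hand that the symmetry isomorphism is self-adjoint. As the paper flags, the appropriate coherence framework for these degenerate compact closed categories is not transparent, so these verifications have to be carried out explicitly rather than imported from a general coherence theorem.
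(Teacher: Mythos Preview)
Your argument is correct and follows the paper's overall architecture: compact closure from Theorem~\ref{cor:wsd-cat-compact-closed}, dagger taken to be the canonical adjoint, and the dagger-compact triangle read off from Corollary~\ref{cor:unit-counit-symmetry}. The difference is one of method rather than route. The paper handles dagger-monoidality abstractly, citing the general compact-closed facts that $'$ is automatically a (strong) monoidal functor and that $\sigma_{A,B}' = \sigma_{A,B}^{-1}$ in any compact closed category; you instead work everything out concretely from the formula $\phi' = \hgamma_A\circ\phi^{*}\circ\hgamma_B^{-1}$ supplied by Lemma~\ref{lem:wsd-canonical-adjoint} together with symmetry, which forces you into the bookkeeping about canonical tensor data that the paper's appeal to Kelly--Laplaza coherence sidesteps. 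Conversely, your direct two-line computation of involutivity is something the paper's proof of Theorem~\ref{thm:symmetryWSD-daggercompact} does not actually spell out; that argument appears only afterward, as the (iii)$\Rightarrow$(i) direction of Theorem~\ref{prop:involutive-implies-symmetry} via the double-dual formula of Lemma~\ref{lem:double-dual}, and your version is essentially that same calculation carried out without the intermediate $\tau_A$ notation. So the paper buys brevity by invoking categorical generalities, while your approach is more self-contained and arguably more complete on the involutivity step.
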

\begin{proof}
  By Theorem \ref{cor:wsd-cat-compact-closed}, $\C$ is degenerate
  compact closed, with a compact structure on $A \in \C$ given by $(A,
  \gamma_A, f_A)$, where $\gamma_A$ is symmetric.  Define
  $(\cdot)^\dag : \C^{\textrm{op}} \to \C$ by $(\cdot)^\dag =
  (\cdot)'$; then $(\cdot)^\dag$ is a monoidal functor which is the
  identity on objects.
  \tempout{ and $\gamma_A$ and $f_A$ are symmetric by
  Proposition \ref{prop:involutive-implies-symmetry}.} 
   Since $\sigma_{A,B}'
  = \sigma_{A,B}^{-1}$ in any compact closed category, $\C$ is
  dagger-monoidal.   That $f_A^\dag
  = \sigma_{A,A} \circ \gamma_A$ is immediate from Corollary
  \ref{cor:unit-counit-symmetry}.
\end{proof}

\begin{lemma}\label{lem:double-dual}
For all $A$, let $\tau_A : A \rightarrow A$ be the
order-isomorphism given by
\[
\tau_{A} := \hgamma_{A} \circ \hf^{*}_{A}.
\]
Then, for all $\phi \in \C(A,B)$, we have
\[
\phi'' = \tau_{B}^{-1} \circ \phi \circ \tau_{A}.
\]
\end{lemma}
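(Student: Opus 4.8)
The plan is to apply Lemma \ref{lem:wsd-canonical-adjoint} twice and then reorganize the resulting composite. Throughout I use the standing finite-dimensionality to identify $A^{\#}$ with $A^{\ast}$ and $(A^{\#})^{\ast}$ with $A$, so that all hatted maps and their linear adjoints have the expected types; in particular $\hgamma_A : A^{\#} \to A$ and $\hf_A : A \to A^{\#}$ are mutually inverse order-isomorphisms (Definition \ref{def:wsd-cat}), $\hf_A^{\ast} : A \to A^{\#}$, and $\tau_A = \hgamma_A \circ \hf_A^{\ast} : A \to A$ as in the statement.

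First, for $\phi : A \to B$, Lemma \ref{lem:wsd-canonical-adjoint} gives $\phi' = (\hf_B \circ \phi \circ \hgamma_A)^{\ast} : B \to A$. Taking linear adjoints and using finite-dimensional reflexivity (so that $\psi^{\ast\ast} = \psi$ under the canonical identification $V \iso V^{\ast\ast}$) yields $(\phi')^{\ast} = \hf_B \circ \phi \circ \hgamma_A : A^{\#} \to B^{\#}$.

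Next, apply Lemma \ref{lem:wsd-canonical-adjoint} again, this time to the morphism $\phi' : B \to A$, so that the relevant data are $\gamma_B$ (for the source $B$) and $f_A$ (for the target $A$):
\[
\phi'' = (\phi')' = \hgamma_B^{\ast} \circ (\phi')^{\ast} \circ \hf_A^{\ast} = \hgamma_B^{\ast} \circ \hf_B \circ \phi \circ \hgamma_A \circ \hf_A^{\ast}.
\]
Now regroup. The right-hand factor is $\hgamma_A \circ \hf_A^{\ast} = \tau_A$ by definition. For the left-hand factor, $\hgamma_B = \hf_B^{-1}$ gives $\hgamma_B^{\ast} = (\hf_B^{\ast})^{-1}$, so $\hgamma_B^{\ast} \circ \hf_B = (\hf_B^{\ast})^{-1} \circ \hf_B$; on the other hand $\tau_B^{-1} = (\hgamma_B \circ \hf_B^{\ast})^{-1} = (\hf_B^{\ast})^{-1} \circ \hgamma_B^{-1} = (\hf_B^{\ast})^{-1} \circ \hf_B$, so $\hgamma_B^{\ast} \circ \hf_B = \tau_B^{-1}$. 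Combining, $\phi'' = \tau_B^{-1} \circ \phi \circ \tau_A$, as claimed.

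The only step requiring care is the bookkeeping: checking that the double-dual identifications make $\psi^{\ast\ast} = \psi$ legitimate for the maps in play, and in particular that the second invocation of Lemma \ref{lem:wsd-canonical-adjoint} is made with $\phi' : B \to A$ having source $B$ and target $A$, so that $\gamma_B$ and $f_A$ — not $\gamma_A$ and $f_B$ — appear. No genuine obstacle arises; all of the categorical work has already been carried out in Lemma \ref{lem:wsd-canonical-adjoint}.
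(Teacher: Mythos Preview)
Your proof is correct and follows essentially the same route as the paper: apply Lemma \ref{lem:wsd-canonical-adjoint} once to $\phi$ and once to $\phi'$, then simplify using $\psi^{\ast\ast}=\psi$ and the identity $\hgamma_B^{\ast}\circ\hf_B=\tau_B^{-1}$ (which the paper records at the outset). The only cosmetic difference is that you invoke the second displayed form of Lemma \ref{lem:wsd-canonical-adjoint} for the second application, whereas the paper uses the first form and then takes adjoints; the computations are otherwise identical.
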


\noindent{\em Proof:} Notice first that $\tau_{B}^{-1} =
(\hgamma_{B} \circ \hf^{*}_{B})^{-1} = \hgamma_{B}^{*} \circ
\hf_{B}$. By Lemma \ref{lem:wsd-canonical-adjoint}, we have
\begin{eqnarray*}
\ \ \phi'' = (\hf_{A} \circ \phi' \circ \hgamma_B)^{*} & = & (\hf_{A} \circ (\hf_B \circ \phi \circ \hgamma_A)^{*} \circ \hgamma_{B})^{*} \\
& = & \hgamma_{B}^{*} \circ (\hf_{B} \circ \phi \circ \hgamma_{A})^{**} \circ \hf_{A}^{*} \\
& = & (\hgamma_{B}^{*} \circ \hf_{B}) \circ \phi \circ (\hgamma_{A} \circ \hf_{A}^{*})\\
& = & \tau_{B}^{-1} \circ \phi \circ \tau_{A} \ \ \Box
\end{eqnarray*}

\begin{corollary}\label{cor:centre}
Let $\phi : A \rightarrow B$ with $\phi'' = \phi$. Then
$\phi \circ \tau_{A} = \tau_{B} \circ \phi$.
\end{corollary}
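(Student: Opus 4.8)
This corollary is an immediate algebraic consequence of Lemma \ref{lem:double-dual}, so the proof requires no new construction, only a rearrangement. The plan is to invoke that lemma to rewrite the double adjoint, then use the hypothesis and the invertibility of $\tau_A$ and $\tau_B$ to isolate the desired intertwining relation.

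\textbf{Steps.} First I would recall from Lemma \ref{lem:double-dual} that, for the given $\phi : A \to B$,
\[
\phi'' = \tau_{B}^{-1} \circ \phi \circ \tau_{A}.
\]
Next, substituting the hypothesis $\phi'' = \phi$ into the left-hand side yields
\[
\phi = \tau_{B}^{-1} \circ \phi \circ \tau_{A}.
\]
Finally, since $\tau_B = \hgamma_B \circ \hf^{*}_B$ is an order-isomorphism (as stated in Lemma \ref{lem:double-dual}, being a composite of the isomorphisms $\hgamma_B$ and $\hf^{*}_B = (\hf_B)^*$), I compose on the left with $\tau_B$ to clear the inverse, obtaining
\[
\tau_{B} \circ \phi = \phi \circ \tau_{A},
\]
which is exactly the claimed identity. $\Box$

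\textbf{Anticipated difficulty.} There is essentially no obstacle here: all the substantive work—tracking how $(\cdot)''$ acts via the maps $\hgamma$ and $\hf^*$, and verifying the identities using Lemma \ref{lem:wsd-canonical-adjoint}—has already been carried out in the proof of Lemma \ref{lem:double-dual}. The only point worth stating explicitly is that $\tau_A$ and $\tau_B$ are genuine isomorphisms, so that the rearrangement is legitimate; beyond that the corollary is just the observation that $\phi \mapsto \tau_B^{-1}\circ\phi\circ\tau_A$ is the double-adjoint operation, and its fixed points are precisely the morphisms intertwining the $\tau$'s.
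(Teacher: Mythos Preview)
Your proposal is correct and matches the paper's intent exactly: the corollary is stated there without proof, as an immediate consequence of Lemma \ref{lem:double-dual}, and your rearrangement $\phi = \tau_B^{-1}\circ\phi\circ\tau_A \Rightarrow \tau_B\circ\phi = \phi\circ\tau_A$ is precisely that one-line deduction.
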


\begin{theorem} \label{prop:involutive-implies-symmetry}
For any object $A$ in a weakly self-dual category $\C$ of
convex operational models, the following are
equivalent: (i) $\phi'' = \phi$ for all $\phi
\in \C(A,A)$,
(ii) $\tau_{A} =
\id{A}$, and (iii) $f_A$ and $\gamma_A$ are  symmetric as a bilinear forms.
\end{theorem}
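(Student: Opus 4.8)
The plan is to derive all three equivalences from the two structural facts already available: Lemma~\ref{lem:double-dual}, which gives $\phi'' = \tau_B^{-1}\circ\phi\circ\tau_A$, together with its Corollary~\ref{cor:centre}, and the defining relation $\hat{f}_A = \hat{\gamma}_A^{-1}$ of a WSD category (Definition~\ref{def:wsd-cat}). The first move is to rewrite $\tau_A$ transparently. Since the linear adjoint commutes with inversion, $\hat{f}_A = \hat{\gamma}_A^{-1}$ gives $\hat{f}_A^* = (\hat{\gamma}_A^*)^{-1}$, so
\[
\tau_A \;=\; \hat{\gamma}_A\circ\hat{f}_A^* \;=\; \hat{\gamma}_A\circ(\hat{\gamma}_A^*)^{-1}.
\]
A one-line unwinding of the definition of the conditioning map shows $\hat{\gamma}_A^*(b)(a) = \gamma_A(a,b)$ whereas $\hat{\gamma}_A(b)(a) = \gamma_A(b,a)$, so $\hat{\gamma}_A = \hat{\gamma}_A^*$ holds precisely when $\gamma_A$ is symmetric as a bilinear form; the same computation for $f_A$ gives $\hat{f}_A = \hat{f}_A^*$ iff $f_A$ is symmetric, and since $\hat{f}_A = \hat{\gamma}_A^{-1}$ this is in turn equivalent to $\hat{\gamma}_A^{-1} = (\hat{\gamma}_A^*)^{-1}$, i.e.\ again to $\hat{\gamma}_A = \hat{\gamma}_A^*$. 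Thus (iii) is equivalent to $\hat{\gamma}_A = \hat{\gamma}_A^*$, which by the displayed identity is exactly $\tau_A = \id{A}$; this is (ii)$\Leftrightarrow$(iii).

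The implication (ii)$\Rightarrow$(i) is immediate from Lemma~\ref{lem:double-dual} with $B = A$: if $\tau_A = \id{A}$ then $\phi'' = \tau_A^{-1}\circ\phi\circ\tau_A = \phi$ for every $\phi\in\C(A,A)$. For (i)$\Rightarrow$(ii), assume $\phi'' = \phi$ for all $\phi\in\C(A,A)$; by Corollary~\ref{cor:centre} (with $B = A$), $\tau_A$ then commutes with every such $\phi$. The point is that $\C(A,A)$ is rich enough: Definition~\ref{defn:category-of-COMs}(iii)--(iv) identifies $\C(I,A)$ with the full state cone of $A$ and $\C(A,I)$ with the full effect cone, so for any state $\alpha$ and any effect $a$ the composite $\alpha\circ a\in\C(A,A)$ acts on $A$ by $x\mapsto a(x)\,\alpha$. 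Writing out $\tau_A\circ(\alpha\circ a) = (\alpha\circ a)\circ\tau_A$ and evaluating at a vector $x_0$ with $a(x_0)\ne 0$ yields $\tau_A(\alpha)\in\R\alpha$ for every state $\alpha$; since the state cone spans $A$, comparing $\tau_A(\alpha+\beta)$ with $\tau_A\alpha + \tau_A\beta$ on linearly independent states forces $\tau_A = \lambda\,\id{A}$ for a single scalar $\lambda$. Feeding this back into $\tau_A = \hat{\gamma}_A\circ(\hat{\gamma}_A^*)^{-1}$ gives $\hat{\gamma}_A = \lambda\,\hat{\gamma}_A^*$, and taking linear adjoints gives $\hat{\gamma}_A^* = \lambda\,\hat{\gamma}_A$, whence $\lambda^2 = 1$. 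As $\tau_A$ is an order-isomorphism (Lemma~\ref{lem:double-dual}) it is positive, so $\lambda = 1$ and $\tau_A = \id{A}$, which is (ii).

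The only genuinely delicate direction is (i)$\Rightarrow$(ii), and within it two points need care: (a) that $\C(A,A)$ really contains all the rank-one positive maps $\alpha\circ a$ needed to force $\tau_A$ to scale each state individually --- this is precisely where the axioms $\C(I,A)\simeq A$ and $\C(A,I)\simeq A^{\#}$ of a category of COMs are used --- and (b) excluding $\lambda = -1$, which relies on $\tau_A$ being positive because it is a composite of order-isomorphisms. Everything else is routine bookkeeping with linear adjoints and the relation $\hat{f}_A = \hat{\gamma}_A^{-1}$.
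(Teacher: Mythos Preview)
Your argument is correct. The equivalence (ii)$\Leftrightarrow$(iii) and the implication (ii)$\Rightarrow$(i) are handled exactly as in the paper, via the identity $\tau_A = \hat{\gamma}_A \circ \hat{f}_A^{*}$ together with Lemma~\ref{lem:double-dual}.

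The direction (i)$\Rightarrow$(ii) is where you diverge. The paper argues abstractly: since $\C(A,A)$ spans $\cl(A,A)$ (its cone being regular), the map $\phi \mapsto \tau_A^{-1}\phi\,\tau_A$ extends to the identity on all of $\cl(A,A)$; one then expands this conjugation map in the tensor basis $M_{ij}:X\mapsto F_jXE_i$ of $\cl(\cl(A,A))$, with $E_0=F_0=1_A$, and reads off $\tau_A = 1_A$ from uniqueness of the expansion. Your route is more hands-on: you exploit the COM-category axioms to manufacture the rank-one morphisms $\alpha\circ a$ inside $\C(A,A)$, use centrality of $\tau_A$ against these to force $\tau_A = \lambda\,\id{A}$, and then invoke $\tau_A = \hat{\gamma}_A(\hat{\gamma}_A^{*})^{-1}$ plus positivity to kill the residual scalar.

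What your approach buys is precisely that last step. The paper's basis argument, taken at face value with $B=A$, really only yields that $\tau_A$ and $\tau_A^{-1}$ are scalar multiples of $1_A$ whose product is $1$---i.e.\ $\tau_A = \lambda\,\id{A}$ for \emph{some} $\lambda$---so the jump to $\tau_A = 1_A$ is not fully spelled out there. Your $\lambda^2=1$ computation from $\hat{\gamma}_A = \lambda\hat{\gamma}_A^{*}$, together with the positivity of $\tau_A$ (which is indeed asserted in the statement of Lemma~\ref{lem:double-dual}), closes that gap cleanly. Conversely, the paper's argument has the virtue of not needing to single out rank-one maps or appeal to the specific form of $\tau_A$ at that stage; it works at the level of linear-algebraic structure of $\cl(\cl(A,A))$.
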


\noindent{\em Proof:}
(i) implies (ii): From (i), and the fact that the morphisms in $\C(A,B)$ are a
basis for $\cl(A,B)$, $(\cdot)''$ is the identity map.  Let $E_i, F_j$
be bases of the spaces $\cl(A,A), \cl(B,B)$ of linear maps on vector
spaces $A,B$ respectively.  Then the maps $X \mapsto F_j X E_i$, where
$X \in \cl(A,B)$, are a basis for the space of linear maps from
$\cl(A,B)$ to itself.  Using this fact, we can expand the map
$(\cdot)'': \phi \mapsto \tau_B^{-1} \circ \phi \circ \tau_A$, which
is a map from $\cl(A,B)$ to itself, in a basis $M_{ij}: \phi \mapsto
F_j X E_i$ where $E_0= 1_B, F_0=1_A$.  By the uniqueness of expansions
in bases and the fact that $(\cdot)''$ is the identity map, we get
$\tau_A = 1_A, \tau_B = 1_B$.  \\ (ii) implies (iii): By (ii) we have $\tau_A :=
\hgamma_A \circ \hf^*_A = 1_A$, so $\hf^*_A = \hgamma_A^{-1} = \hf_A$.
Since $f(a,b) \equiv \hf(a)(b) \equiv \hf^*(b)(a)$, $\hf^*_A = \hf_A$
is equivalent to symmetry of $f_A$.  Symmetry of $\gamma_A$ then
follows from the fact that $\gamma_A = f_A^{-1}$.\\
(iii) implies (i): If $f_A$ (hence, also $\gamma_A$) are symmetric, then we have
$f_{A}^{\ast} = f_{A}$, whence, $\tau_{A} = \gamma_A \circ \hf_{A} = 1_A$;
thus, by Lemma \ref{lem:double-dual}, $\phi'' = \phi$ for all $\phi \in \C(A,A)$. $\Box$\\

Applying Theorem 24, we now have the

\begin{corollary}\label{thm:symmteryWSD-daggercompact}
A WSD monoidal category of COMs is dagger compact with
respect to the canonical adjoint $' : \C^{\op} \rightarrow \C$, if and
only if it is {\em symmetrically} self-dual. \end{corollary}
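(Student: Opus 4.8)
The plan is to read this off from Theorem~\ref{thm:symmetryWSD-daggercompact} together with the equivalence in Theorem~\ref{prop:involutive-implies-symmetry}; no substantially new argument should be needed. For the ``if'' direction I would simply note that if $\C$ is symmetrically self-dual, then by Definition~\ref{def:wsd-cat} the implementing states $\gamma_A$ may be chosen symmetric for every object $A$, and Theorem~\ref{thm:symmetryWSD-daggercompact} asserts exactly that under this hypothesis $\C$ is dagger compact with $\dagger$ equal to the canonical adjoint $'$. So that half is a citation.

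For the ``only if'' direction, suppose $\C$ is a WSD monoidal category of COMs that happens to be dagger compact with $\dagger = {}'$. By Theorem~\ref{cor:wsd-cat-compact-closed} it is degenerate compact closed, with compact structure $(A,\gamma_A,f_A)$ on each object $A$. Being a dagger, the functor $'$ is strictly involutive on morphisms, so $\phi'' = \phi$ holds in particular for every $\phi \in \C(A,A)$ and every $A$. This is condition~(i) of Theorem~\ref{prop:involutive-implies-symmetry}, so the implication (i)$\Rightarrow$(iii) of that theorem yields that $f_A$ and $\gamma_A$ are symmetric bilinear forms for every $A$; by Definition~\ref{def:wsd-cat} this is precisely the statement that $\C$ is symmetrically self-dual, which completes the argument.

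I would close with a remark flagging where the real content sits: the chain Lemma~\ref{lem:wsd-canonical-adjoint} $\to$ Corollary~\ref{cor:unit-counit-symmetry} $\to$ Lemma~\ref{lem:double-dual} $\to$ Theorem~\ref{prop:involutive-implies-symmetry} is what identifies $\phi'' = \tau_B^{-1}\circ\phi\circ\tau_A$ and forces $\tau_A = \id{A}$ from involutiveness, and it is Corollary~\ref{cor:unit-counit-symmetry} that already makes the dagger-compactness compatibility $\gamma_A = \sigma_{A,A}\circ f_A'$ automatic in any WSD category once $\gamma_A$ is symmetric. Given all of that, the corollary itself is pure bookkeeping. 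The only point demanding any care is checking that ``dagger compact with respect to $'$'' really does force \emph{strict} involutiveness of $'$, rather than the mere near-involutiveness $A'' \cong A$ available in a general compact closed category --- but this is forced by the very definition of a dagger category, so I anticipate no genuine obstacle.
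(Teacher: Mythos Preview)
Your proposal is correct and matches the paper's approach exactly: the paper gives no proof for this corollary beyond the one-line remark ``Applying Theorem 24, we now have the\ldots'', leaving the reader to combine Theorem~\ref{thm:symmetryWSD-daggercompact} for the ``if'' direction with Theorem~\ref{prop:involutive-implies-symmetry} (just proved) for the ``only if'' direction. You have simply spelled out explicitly what the paper leaves implicit, including the observation that a dagger is by definition strictly involutive, which is precisely what feeds into condition~(i) of Theorem~\ref{prop:involutive-implies-symmetry}.
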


Theorem~\ref{prop:involutive-implies-symmetry} tells us that if $\C$ is
a \emph{saturated} weakly self-dual theory in which $'$ is an
involution, then the interior of $A_+$ is a domain of positivity in the sense of
Koecher \cite{KoecherNotes}.  If we further suppose that every
irreducible state space in $\C$ is {\em homogeneous}, meaning that
$G(A_+)$ acts transitively on the interior of $A_{+}$ for every $A \in
\C$ (a condition one can motivate physically in several ways, e.g.,
\cite{BGW09, Wilce09}), then we are close to requiring that every
state space in $\C$ be a formally real (also called Euclidean)
Jordan algebra \cite{Koecher58, Vinberg}. This line of thought will be
pursued in a sequel to this paper.

\section{Conclusion}
\label{sec:conclusion}

We began with the observation that compact closure, and still more,
dagger compactness, represent strong constraints on a physical
theory, thought of as a symmetric monoidal category of ``systems"
and ``processes". Our results cast some light on the operational
(or, if one prefers, the physical) content of these assumptions in
the concrete --- but still very general --- context of probabilistic
(or convex operational) theories. In particular, we have seen that,
in probabilistic theories {\em qua} categories of COMs, compact
closure amounts to the condition that all processes -- that is, all
dynamics -- can be induced by the kind of conditioning that occurs
in a teleportation-like protocol. Indeed, in such a theory, a
process between systems $A$ and $B$ {\em amounts to} a choice of
bipartite state on $A \otimes B$.  Finally, we have established that
for weakly self-dual theories \emph{symmetric} weak self-duality
implies the existence of a dagger compatible with the compact closed
structure.  As in the special case of quantum mechanics, a dagger
amounts to reversing the order of conditioning.

Throwing these results into sharper relief is the following result
(details of which will appear elsewhere).  Let $\C$ be any symmetric
monoidal category, with associated monoid of scalars $S = \C(I,I)$.
Then, for any monoid homomorphism $p : S \rightarrow [0,1]$, there
exists a functor $V_p : \C \rightarrow \Com$. This can be used to
transfer the monoidal structure of $\C$ to the image category
$F(\C)$, giving us a representation of $\C$ {\em as} a category of
COMs. The construction is straightforward: one defines a Mackey
triple $(X_A,\Sigma_A,p_A)$ for each object $A \in \C$, with $X_A =
\C(A,I)$, $\Sigma_A = \C(I,A)$ and $p_A(x,\alpha) = p(x \circ
\alpha)$; linearizing this, as in Example \ref{ex:mackey-triple},
yields a COM $(V(A),V(A)^{\#},u_A)$.

Several directions for further study suggest themselves.  It would be
interesting to identify necessary and sufficient conditions for the
COM representations discussed in section 4.1 to yield {\em finite
  dimensional} models -- and, equally, one would like to know how far
the other results of Sections 4 and 5 extend to infinite-dimensional
systems.  Our definition of category of weakly self-dual state
spaces assumes the existence of a state that induces, by
conditioning, an isomorphism from the state cone to the effect cone,
and an effect inducing its inverse; but as we noted, it would be
interesting to investigate conditions under which this follows just
from weak self-duality of the objects.  As mentioned at the end of
section \ref{sec:weakly-self-dual}, the consequences of homogeneity
of the state-spaces should also be explored.  Perhaps the most
urgent task, though, is to identify operational and
category-theoretic conditions equivalent to the strong self-duality
of a probabilistic theory.

\end{document}